\theoremstyle{plain}
\newtheorem{thm}{\protect\theoremname}[section]
\theoremstyle{assumptionus}
\newtheorem{assumption}[thm]{\protect\assumptionusname}
\theoremstyle{conjectureus}
\newtheorem{conjecture}[thm]{\protect\conjectureusname}
\theoremstyle{observationus}
\newtheorem{ob}[thm]{\protect\observationusname}
\theoremstyle{remarkus}
\newtheorem{bem}[thm]{\protect\remarkusname}
\theoremstyle{lemmaus}
\newtheorem{lemma}[thm]{\protect\lemmausname}
\theoremstyle{conclusionus}
\newtheorem{conclusion}[thm]{\protect\conclusionusname}
\theoremstyle{theoremus}
\newtheorem{theo}[thm]{\protect\theoremusname}
\theoremstyle{definitionus}
\newtheorem{defi}[thm]{\protect\definitionusname}
\theoremstyle{corollaryus}
\newtheorem{koro}[thm]{\protect\corollaryusname}
\providecommand{\assumptionusname}{Assumption}
\providecommand{\conclusionusname}{Conclusion}
\providecommand{\conjectureusname}{Conjecture}
\providecommand{\corollaryusname}{Corollary}
\providecommand{\definitionusname}{Definition}
\providecommand{\lemmausname}{Lemma}
\providecommand{\observationusname}{Observation}
\providecommand{\remarkusname}{Remark}
\providecommand{\theoremname}{Theorem}
\providecommand{\theoremusname}{Theorem}
\begin{document}

\title{Emergent Space-Time via a Geometric Renormalization Method}

\author{Saeed Rastgoo}
\email{saeed@xanum.uam.mx}

\affiliation{Departamento de F\'{i}sica, Universidad Aut\'{o}noma Metropolitana
- Iztapalapa~\\
San Rafael Atlixco 186, Mexico D.F. 09340, Mexico}

\author{Manfred Requardt}
\email{requardt@theorie.physik.uni-goettingen.de}

\affiliation{Institut f\"{u}r Theoretische Physik Universitaet Goettingen~\\
Friedrich-Hund-Platz 1, 37077 Goettingen, Germany}

\date{\today}
\begin{abstract}
We present a purely geometric renormalization scheme for metric spaces
(including uncolored graphs), which consists of a coarse graining
and a rescaling operation on such spaces. The coarse graining is based
on the concept of quasi-isometry, which yields a sequence of discrete
coarse grained spaces each having a continuum limit under the rescaling
operation. We provide criteria under which such sequences do converge
within a superspace of metric spaces, or may constitute the basin
of attraction of a common continuum limit, which hopefully, may represent
our space-time continuum.

We discuss some of the properties of these coarse grained spaces as
well as their continuum limits, such as scale invariance and metric
similarity, and show that different layers of spacetime can carry
different distance functions while being homeomorphic. 

Important tools in this analysis are the Gromov-Hausdorff distance
functional for general metric spaces and the growth degree of graphs
or networks. The whole construction is in the spirit of the Wilsonian
renormalization group.

Furthermore we introduce a physically relevant notion of dimension
on the spaces of interest in our analysis, which e.g. for regular
lattices reduces to the ordinary lattice dimension. We show that this
dimension is stable under the proposed coarse graining procedure as
long as the latter is sufficiently local, i.e. quasi-isometric, and
discuss the conditions under which this dimension is an integer. We
comment on the possibility that the limit space may turn out to be
fractal in case the dimension is non-integer. At the end of the paper
we briefly mention the possibility that our network carries a translocal
far-order which leads to the concept of wormhole spaces and a scale
dependent dimension if the coarse graining procedure is no longer
local.
\end{abstract}
\maketitle
\tableofcontents{}

\section{Introduction}

Classical macroscopic space-time (henceforth: S-T) is a continuous
manifold on macroscopic scales. The physical (quantum) fields live
on this continuous manifold as separate entities. Macroscopic objects
move freely through S-T. On the other hand, on a more microscopic
(but, compared to the infamous Planck scale, still mesoscopic) scale
the quantum vacuum appears to be full of quantum fluctuations while
on a still finer scale even S-T itself is expected to wildly fluctuate.

While in string theory the framework is (at least initially) constructed
over smooth (higher dimensional) manifolds, in most of the other approaches
to quantum gravity one assumes that at a primordial level S-T is both
discrete and presumably quite erratic. A crucial concept in these
latter approaches is the notion of \emph{background independence}.

There have been various attempts to introduce a discrete structure
and reconstruct smooth classical (or macroscopic) S-T from such a discrete and irregular
substratum (we will discuss in more detail in the following section what is meant by these attributes). But as this paper is not intended to be a review we are able to mention
only a few randomly selected sources. Examples are the \emph{spin
networks} and \emph{spin foam} models in loop quantum gravity (LQG) and its
path integral versions. Some more recent examples are the \emph{quantum
graphity} approach \citep{Thiemann,Rov1,Baez,Smo1,Marko1,Smo2,Oriti1,Perez,Konopka1,Konopka2,Dittrich1409,Dittrich1609}, 
the \emph{group field theory} framework and the \emph{random tensor
networks} \citep{Oriti2,Chen}. A nice description of the whole field can for example be found in \citep{Oriti1302}
with emphasis on the emergence of space-time in the various approaches.

All these approaches are, to a larger or lesser degree, derived from attempts
to directly quantize classical general relativity (GR). The same holds
for  \emph{dynamical triangulation} and related frameworks
like e.g. \emph{Regge calculus} \citep{Ambjorn}. Another approach
has been developed by ourselves and coworkers and is based on generalizations
of cellular automata (CA), i.e. \emph{Structurally Dynamic Cellular
Networks} (SDCN); for more information see the recent review \citep{RequRast}. To this class also belongs e.g. \cite{trugenberger1610} which studies random-Ising-like models of space-time. One should also mention the work in general network theory, see e.g. the recent \citep{Bianconi1509}
or our paper \cite{scalefree}

The point of view, shared by all these approaches is the conviction
that S-T on its most primordial level is a dynamic substratum consisting
of certain elementary degrees of freedom. Their nature, however, may
be different in the various schools. In many frameworks they carry a certain 
a priori geometric flavor (inspired by the \emph{simplicial resolution} of
continuous manifolds) and consist for example of infinitesimal triangles or tetrahedrons  having, a fortiori, edges of a certain infinitesimal length.
In other approaches these microscopic degrees of freedom (DoF) are
viewed more abstractly as elementary cells, carrying internal states,
interacting with each other (or exchanging information) via elementary
interactions. In this latter case the quantum vacuum is regarded as
a huge irregular dynamical system (e.g. the SDCN). The geometric notions
are here considered to emerge from a primarily non-geometric substratum
in the spirit of J. A. Wheeler: Geometry from Non-Geometry (see Sect. 44.4 of \citep{Phonebook}).
In e.g. LQG the elementary DoF carry both qualities to a greater or lesser degree, i.e. geometric ones and more abstract ones. Another important point: the DoF may not even have any local character. The localization of objects on coarser scales may  also be emergent in a relational way.

In concluding our brief discussion of the various points of view expressed
in the frameworks addressed above we want to comment on some remarks
made in the nice contribution of Bombelli et. al. \citep{Bombelli}
because it offers us the opportunity to clarify some points of principle
and misunderstandings. This paper deals mostly with the reconstruction
of a continuum manifold via a \emph{piecewise linear} (embedded) manifold
which is, on its side, derived from certain graphs. This they call
the \emph{inverse problem}.

One should say that this is a fairly widespread strategy in the mathematics
of manifolds with a huge amount of published results. An important
question in fundamental physics is to what extent nature on its most
primordial level is actually concerned with such geometric micro objects
like simplices, tetrahedra and the like.

We shall argue in the following that quantum space-time is, in our
view, rather an extremely complex and erratic dynamical system, consisting
of an array of elementary DoF together with elementary interaction
among these DoF. It is then the task to derive geometric notions (and
for example continuum analysis) from such a primordial substratum.
We developed such concepts in e.g. \citep{Requ1} as well as a certain
\emph{discrete calculus} (which has relations to \emph{non-commutative
geometry}). We showed for example that one can develop a kind of \emph{(co)homology
theory} (see section 3.2) by associating simplices to subsets of DoF
with elementary interactions existing between all the respective pairs
of DoF in the subset. For more information see our recent \citep{RequRast}.

To clear things up a bit regarding some of the comments in \citep{Bombelli}
about the notion of the dimension introduced in \citep{Requ2}, we
mention a few words. We developed this concept of dimension for graphs
and networks in \citep{Requ3}, being mainly motivated by physical
ideas. We studied how space dimension really enters in the physical
formulas in say statistical mechanics, critical phenomena etc. We
realized that what typically really matters is the number of new interaction
partners a local site sees after consecutive steps on e.g. a lattice,
embedded in some continuous space. It then happens that the dimension
of the embedding space enters in a characteristic way in the physical
formulas. We observed later in \citep{Requ2} that a related notion
was used in a beautiful field of pure mathematics, that is, \emph{geometric
group theory, viz. Cayley graphs} and is there called the \emph{growth
degree}.

One should furthermore mention that our concept of dimension has,
despite its superficial similarity, nothing to do with any \emph{fractal
dimension}. The latter concept describes the behavior of a system
in the infinitely small while our \emph{(scaling) dimension} rather
characterizes the large scale properties of graphs and networks and
is an important invariant of such structures. Its advantage is that
it shows that integer dimensions are very particular while non-integer
dimensions are rather the rule (as in the fractal case). This gives
us a tool and criterion to single out spaces having an integer macro-dimension
(like our own S-T). On the other hand, many of the more geometric
approaches are dealing right from the start with integer dimensions.
We think, it may be interesting to learn that the integer dimensionality
of our physical S-T needs some explanation.

As we already indicated, our aim in the following is it to develop a coarse graining scheme
and continuum limit comprising as many different frameworks as possible. One should note, however, that this is an ambitious task as the meaning of coarse graining and/or continuum limit may have a different meaning in the various approaches. The same holds for the concept of dynamics (we comment on these points in the next section). 
To this end we make the following assumption:
\begin{assumption}
We assume that S-T on its most primordial level is a complex dynamical
system, consisting of a huge array of microscopic DoF together with
a (random) distribution of elementary interactions connecting these
DoF.
\end{assumption}
One should perhaps emphasize that working with networks or graphs
does \emph{not} mean that we really think of the most primordial objects
as extensionless points, quite to the contrary, they are usually assumed
to represent certain lumps having an internal structure which, on
the respective scale of resolution, cannot be resolved (cf. our paper
\citep{Menger} and the remarks by Menger at the end of his contribution
in \citep{Einstein}). Our following contribution shows that Einstein's
skeptical remarks concerning such a radical program made in \citep{Einstein}
was perhaps too conservative.

As a further remark, there is yet another approach towards a space
of spaces in (quantum) space-time physics. i.e. classifying space
structure via the spectrum of the Laplacian \citep{Seriu}, but we
do not touch this field in this paper.

The structure of the paper is as following: we start in section \ref{sec:Main-strategy}
by presenting the big picture and main steps of the work, so that
the reader does not get lost in the details of the following sections.
In section \ref{sec:typical}, we present the idea of a phase cell
in state space of a system, that can be considered as the basin of
attraction of the evolution map. This will be used to categorize the
class of states (discrete spaces) that yield the same smooth continuum
limit. Section \ref{sec:Graphs-mspace} contains necessary concepts
about the graphs as metric spaces. In section \ref{sec:coarse} we
present our generic coarse graining schemes, consisting of quasi-isometry
and rough isometry, and provides some examples of each. In section
\ref{sec:supermetric space}, we introduce the Gromov-Hausdorff space,
and its associated metric, the Gromov-Hausdorff metric with respect
to which (non)isometry of spaces is measured. Sections \ref{sec:Converg-1}
and \ref{sec:Converg-2} are devoted to the notions of convergence
of a sequence of metric spaces, and defining their rescaling and continuum
limit. In section \ref{sec:Geo-Renorm}, we combine all the information
in previous sections and fully develop and describe the geometric
renormalization process that can lead from a discrete structure like
a graph to a continuum limit such as a manifold. In section \ref{Dim}
we present a notion of dimension in our spaces, and very briefly discuss
its properties under the geometric renormalization process. Finally,
we summarize and make some concluding remarks in section \ref{sec:Conclusion}.
The appendix includes several definitions that are needed for the
paper to be self-consistent.

\section{Main strategy: the big picture\label{sec:Main-strategy}}

Any theory with a discrete or quantum pregeometry, is faced with the
challenge of deriving a continuum limit for this discrete structure,
that looks like a desired smooth spacetime manifold. Of course we
are aware of the facts that discrete does not always mean quantum,
and the concepts of semiclassical limit and continuum limit are, in most cases, not
 the same,  describe different physics and do not commute. 
\begin{bem} We would like to clarify what we mean by a continuum limit. In various approaches it means simply a way of embedding a discrete (approximate) structure into some (preexisting) background manifold as in so called piece-wise linear geometry. Our enterprise is much more ambitious. We rather perform a true scaling limit starting from a sequence of discrete spaces. That is, the limit space can have a quite complicated continuous structure.
\end{bem}

The semiclassical limit is generally associated to large quantum numbers, examples of which can be found in ferromagnetism and superconductivity, which in some cases may be seen as a manifestation of the correspondence principle. In loop quantum gravity for example, semiclassical typically refers to a limit of a fixed finite (generally small) number of large quantum numbers (spins) corresponding to a graph on which a coherent state (labelled by the aforementioned spins) lives. This semiclassical limit on large scales, evidently   corresponds to the Regge-gravity.

The continuum limit on the other hand may be seen as being related to the so called thermodynamic limit, where there is a huge (or infinite) number of DoF  each with  small associated quantum numbers. In LQG, for example, the continuum limit  generally refers to many, perhaps infinite, number of small spins.    

Also there may be states in the theory that
may not have any semiclassical counterpart, such as the replacement
of cosmological singularities with highly quantum nonsingular structures
in loop quantum cosmology \citep{Ashtekar-status-LQC}. 

There are many viewpoints and methods that deal with the issue of getting spacetime (or discrete/continuous general relativity) as their semiclassical or continuum limit. In what follows we try to make some clarifying remarks regarding our method and its relation to some of other ones. 
 We begin with some brief remarks how a continuum limit is  understood in  LQG as presented in  \citep{Dittrich1409,Dittrich1609} 
which refers there to \cite{Thiemann}.
There, the ultimate task is to construct  a continuum physical Hilbert space, satisfying the constraints of the theory, as a limit of an inductive process, associated to iterated refinements of graphs embedded in a spatial manifold. 

What we have in mind, on the other hand, is rather a more direct way of constructing a kind of a macroscopic continuous smooth manifold, representing the classical spacetime, out of a discrete fundamental structure (a graph).  This transition from a graph to a smooth structure is implemented with the help of two different kinds of operations. One of these operations  is, in a sense, related to the coarse graining within discrete structures. The  second one, i.e. a rescaling operation, leads to a  macroscopic continuum (in the optimum case).  It defines the true transition from discrete spaces to continuous spaces and is quite non-trivial.

Another point to be briefly discussed is the kind of structures being  constructed over this discrete network. In canonical LQG, the spin networks, constituting a complete basis of the Hilbert space of the solutions to the quantum Gauss constraint, are represented by graphs whose edges are colored by the irreducible representations of a compact  group ($SU(2)$), and whose vertices are the intertwiners of the representations of the edges that are connected to those vertices.    In our own approach,  inspired by the generalization of cellular automata, the elementary  structure also consists  of vertices with simple internal states (and like LQG obey some kind of consistency relation based on the states of the edges connected to them), and edges which can change their orientation (\emph{directed graphs}) or be deleted or created due to  the dynamics. A choice of the dynamics can be given by an interaction of edge and vertex states yielding a new network, after each evolution step called a ``clocktime'' step. As in the spin network case we can define Hilbert spaces over the vertices and edges and then graph operators such as discrete Laplacians and Dirac operators. This procedure also establishes a connection to Connes' \emph{noncommutative geometry} (for details see for example \cite{Requ1}, \cite{Requlumps} or \citep{Requardt0001}). In section 4 of \cite{RG} we showed that the evolving dynamics belongs to the same general class of graph transformations or dynamics, as is the case for \emph{spin network dynamics} or \emph{causal set dynamics}. I.e., we discuss in this paper primarily the continuum limit of the ``space-like slices'' in  $S-T$. The dynamics is formulated as for spin-networks or causal sets by having a consecutive sequence of such geometric networks states given by some graph transformation rule. For convenience we prefer to call our underlying substratum $S-T$ while, for the time being, we mainly deal with the $S$-part. 

Furthermore, in 
\citep{Requardt0611}
we performed large scale numerical computer studies of various dynamical or statistical parameters of our networks to learn  about their large scale or longtime behavior, or to find \emph{phase transitions}.
That is, we think all these various approaches, while being different with respect to their technical details, are on the other hand sufficiently related so that a joint treatment seems to be reasonable. We would like to resume what we are going to do and what we will not do:
\begin{bem} In the following we will abstract completely from the internal edge and vertex states, the Hilbert space structure and the operators living on the network as well as the dynamics, as we suspect that their respective macroscopic or continuum limit can be studied separately. We concentrate exclusively on the geometric state of our network, that is, its wiring diagram. We surmise that this purely geometric substratum and its evolution (as in the causal set approach) may describe the fine structure of our macroscopic  space-time manifold.
\end{bem}

It is clear that the network dynamics also changes after each coarse graining or renormalization step. In previous work (see e.g. \cite{RG} or our recent review \cite{RequRast}) we formulated for example a class of graph dynamics via an interaction of vertex and edge states which can be applied on each level of coarse graining. By the same token one can study the limit behavior of operators and fields defined on our networks but we will postpone this investigation to future work. By the way, one has to be prepared to deal with operators on fractal spaces which is no easy task.

A last remark concerning the generality of our approach. We think that the definition of coarse graining and/or renormalization via general quasi-isometries comprises many of the possible renormalization schemes, but we have not investigated every suggestion in detail.

We want to conclude this brief survey with a remark concerning the \emph{diffeomorphism invariance} which is a central theme in the theory of gravitation and consequently in LQG. Without constantly mentioning it, it is also encoded in our own approach which is based on the theory of general metric spaces.
\begin{bem} In the following we usually deal with equivalence classes of metrically isomorphic spaces. That is, the limits we are going to construct consist of whole classes of isomorphic spaces which are treated as essentially undistinguishable. In more technical terms (to be defined in the following) they have Gromov-Hausdoff-distance zero. However, there may exist spaces with GH-distance zero, which are not isomorphic. That is, this class may be actually larger. In any case, this kind of equivalence could in our view be considered as being similar to  diffeomorphism invariance. 
\end{bem}

Let us now start to develop the general framework of our approach (the main ideas and a large part of the work
introduced here is based on \citep{Requ2}). As we mentioned earlier, this is a rather direct, and
to a great extent, general approach, since it deals with general types
of graphs, and does not assume any specific Hamiltonian, action,
etc., for the semiclassical or continuous theory, although the final goal is derive an emergent smooth manifold equipped with a metric that satisfied general relativity. More specifically, we ask: given
a graph $G_{0}$ (viz., the geometric abstraction of a network of
interacting sites) as our pregeometric structure and a physically
motivated coarse graining process, is it possible to get to a smooth
manifold as the semiclassical limit of $G_{0}$? and if so, how? and
under what conditions?

This of course depends on the coarse graining process, and on the
criterion used to measure the convergence to a limit space which can
possibly be associated to our space-time manifold. All of this will
be discussed in detail in the following sections. To give a big picture
of the strategy, here is how we attack the problem: 
\begin{enumerate}
\item We consider the superspace $\mathcal{S}$ of all non-compact but locally
compact metric spaces, including suitable uncolored graphs\footnote{The treatment of colored graphs is postponed to future works.}
and suitable manifolds with a metric. This can be seen to be an analogue of the ``theory space'' in renormalization methods.  
\item A graph $\left(G_{0},d_{0}\right)\in\mathcal{S}$, with $d_{0}$ being
a graph metric, is chosen as our initial system, representing the
fundamental layer that constitutes the fundamental discrete level
of the smooth spacetime. To get to the smooth spacetime, two types
of operations are introduced: 
\begin{enumerate}
\item A \textendash to some extent\textendash{} generic coarse graining
scheme $\mathcal{K}$, based on the notion of quasi-isometry. This
coarse graining procedure, assigns to a metric space (in this case
a graph) $\left(G_{1},d_{1}\right)\in\mathcal{S}$, another metric
space $\left(G_{2},d_{2}\right)\in\mathcal{S}$ such that $\left(G_{2},d_{2}\right)$
is the coarse grained space derived from $\left(G_{1},d_{1}\right)$
by applying $\mathcal{K}$ once: 
\begin{equation}
\left(G_{1},d_{1}\right)\stackrel{\mathcal{K}}{\longrightarrow}\left(G_{2},d_{2}\right).
\end{equation}
This process can be performed repeatedly to give us a sequence of
(possibly countably infinite number of ) coarse grained spaces $\left\{ \left(G_{i},d_{i}\right)\right\} ,i=0,\ldots,N$
such that 
\begin{equation}
\left(G_{0},d_{0}\right)\stackrel{\mathcal{K}}{\longrightarrow}\left(G_{1},d_{1}\right)\stackrel{\mathcal{K}}{\longrightarrow}\left(G_{2},d_{2}\right)\stackrel{\mathcal{K}}{\longrightarrow}\cdots\stackrel{\mathcal{K}}{\longrightarrow}\left(G_{N},d_{N}\right),
\end{equation}
where all the $\left(G_{i},d_{i}\right)$'s belong to $\mathcal{S}$.
We call this chain, a coarse graining chain.
\item A rescaling map
\begin{equation}
\phi_{\lambda}:(G_{i},d_{i})\longmapsto(G_{i},\lambda d_{i})
\end{equation}
on each member of the coarse graining chain, such that the limit $\lambda\to0$
of the above map corresponds to the continuum limit $\left(G_{i,\infty},d_{i,\infty}\right)$
of $G_{i}$,
\begin{equation}
\lim_{\lambda\to0}\phi_{\lambda}\left((G_{i},d_{i})\right)=\left(G_{i,\infty},d_{i,\infty}\right).
\end{equation}
\end{enumerate}
\item Combining the two above operations, the following picture arises
\[
\xymatrix{[G_{0}]_{\infty} & [\mathcal{K}(G_{0})]_{\infty} & [\mathcal{K}^{2}(G_{0})]_{\infty} & [\ldots]_{\infty} & [\mathcal{K}^{m}(G_{0})]_{\infty} & [\cdots]_{\infty} & [\mathcal{K}^{n}(G_{0})]_{\infty}\\
G_{0}\ar[r]^{\mathcal{K}}\ar[u]|-{\phi_{\lambda\rightarrow0}} & \mathcal{K}(G_{0})\ar[r]^{\mathcal{K}}\ar[u]|-{\phi_{\lambda\rightarrow0}} & \mathcal{K}^{2}(G_{0})\ar[r]^{\mathcal{K}}\ar[u]|-{\phi_{\lambda\rightarrow0}} & \cdots\ar[r]^{\mathcal{K}}\ar[u]|-{\phi_{\lambda\rightarrow0}} & \mathcal{K}^{m}(G_{0})\ar[r]^{\mathcal{K}}\ar[u]|-{\phi_{\lambda\rightarrow0}} & \cdots\ar[r]^{\mathcal{K}}\ar[u]|-{\phi_{\lambda\rightarrow0}} & \mathcal{K}^{n}(G_{0})\ar[u]|-{\phi_{\lambda\rightarrow0}}
}
\]
where $[\mathcal{K}^{i}(G_{0})]_{\infty}=G_{i,\infty}$. In the coarse
graining chain of discrete spaces (the lower horizontal chain), the
coarse graining operation $\mathcal{K}$ is applied consecutively
until one (or possibly both) of the followings happens:
\begin{enumerate}
\item At a certain point the spaces become roughly isometric (that is, $\mathcal{K}$
makes a transition from being a true quasi-isometry to a rough isometry).
In this case we show that the continuum limits of this row of roughly
isometric spaces are the same continuous space.
\item The coarse graining chain ends in a fixed point or a set of accumulation
points under $\mathcal{K}$. This happens after a finite number of
steps. We think however that this is a non-generic situation. For
more remarks see section \ref{sec:Geo-Renorm}.
\end{enumerate}
\item Before becoming roughly isometric, the memebers of the coarse graining
chain are purely quasi-isometric. In this case we show that the scaling
limits $G_{i,\infty},G_{j,\infty}$ of two such spaces $G_{i}$ and
$G_{j}$ are homeomorphic (even can be chosen to be the same topological
space) but carry different metrics. This implies that different levels
of spacetime can have different metric even if they are the same set.
\item Finally we define a notion of dimension, and briefly discuss the conditions
on its integerness, its stability, or change under the coarse graining
and scaling operations. 
\end{enumerate}
The whole process prehaps resembles the idea of the \textbf{renormalization
group} in the Wilsonian sense. This is of course not accidental. In
\citep{RG} we called a certain coarse graining scheme the \textbf{geometric
renormalization group}. We will complement it with a rescaling process
towards some \textbf{continuum space} developed in \citep{Requ2}.

Before continuing, we should mention that some of the important concepts
are written in bold, and for self-containment, many of them are either
defined in the main body of the text, or are presented in the appendix.

\section{Macro States, Micro States, and Typical States\label{sec:typical}}

It is common practice in the physics of large and complicated systems,
consisting of a huge number of interacting microscopic DoF, to work
with ensembles of states or configurations. The same is the case in
the various approaches to quantum gravity. Frequently something like
a canonical ensemble over micro states is used, the statistical weight
being given by a (pseudo) Hamiltonian, the main purpose of which is
usually defining the exponent in the Boltzmann weight. One should
note, however, that it is not completely clear if the concept of a
Hamiltonian, apart from its mere probabilistic role, is an adequate
notion in fundamental space-time physics.

On the other hand, we observe that macroscopic space-time, as the
stage on which all the usual physical processes are going on, is not
some ensemble in our various model theories. In the following we want
to briefly explain how we see its role in our paper.

In \citep{Observables} we indicated that S-T may rather be called
an \textbf{order parameter manifold} with the non-vanishing metric
tensor $g(x)$ being an \textbf{order parameter field}. The concept
of an order parameter stems from the statistical mechanics of phase
transitions and we think, a similar phenomenon can be seen in our
space-time context if we view S-T as described in the introduction.
We plan to give more details elsewhere, for the time being we content
ourselves to invoke the following picture:
\begin{conjecture}
Classical space-time, S-T, is the macroscopic, long-distance, low-energy
description of an ensemble of microscopic configurations of only incompletely
known (quantum) nature. As in statistical mechanics, we regard this
ensemble of underlying micro states as a phase cell of states which
look alike macroscopically.
\end{conjecture}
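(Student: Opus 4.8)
The plan is to substantiate this conjecture not by a single deductive argument but by assembling the machinery of the later sections into a coherent picture in which the \emph{phase cell} is identified with a basin of attraction of the geometric renormalization flow. First I would make precise the phrase ``look alike macroscopically.'' The natural candidate is to declare two micro states $(G,d),(G',d')\in\mathcal{S}$ macroscopically equivalent when their coarse-graining chains, followed by the rescaling limit $\phi_{\lambda\to 0}$, converge to continuum spaces at Gromov--Hausdorff distance zero from one another. This turns ``looking alike'' into an equivalence relation on $\mathcal{S}$ and identifies each equivalence class with a single point in the quotient superspace, which is the precise analogue of a macro state.

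Next I would define the phase cell associated with a given macroscopic space-time as the full preimage of one such limit class under the combined operation $\phi_{\lambda\to 0}\circ\mathcal{K}^{m}$ in the limit of large $m$. The key structural input, to be borrowed from the big-picture list and the later sections, is that the coarse graining $\mathcal{K}$ is built from quasi-isometries: once a chain becomes roughly isometric the continuum limits coincide exactly, and while it is only purely quasi-isometric the limits remain homeomorphic though metrically distinct. This is precisely what makes the phase cell \emph{large} --- it guarantees that vast numbers of microscopically inequivalent graphs are funnelled into the same continuum manifold, which is the content of ``an ensemble of microscopic configurations'' that are individually ``only incompletely known'' yet collectively yield one classical S-T.

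The third step is to exhibit the order parameter. I would argue that the metric tensor $g(x)$ on the limit manifold plays the role that the magnetization or a condensate amplitude plays in ordinary statistical mechanics: it is constant, up to isometry, across the whole phase cell and changes only when one crosses over to a different cell, i.e. a different continuum limit. Combined with the earlier remark that our Gromov--Hausdorff equivalence is morally a form of diffeomorphism invariance, this delivers the picture of S-T as an \textbf{order parameter manifold} labelled by the class $[g(x)]$, with the microscopic ensemble supplying the statistical substratum and the renormalization flow supplying the projection onto the macro description.

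The hard part will be twofold, and it is the reason this remains a conjecture rather than a theorem. First, one must actually prove that the basin of attraction is nonempty and that the scaling limit exists for a physically interesting seed graph $G_0$; this rests on the convergence criteria and the dimension-stability results of the later sections and is where all the genuine analytic work sits. Second, and more fundamentally, bridging the deterministic geometric convergence with the genuinely probabilistic, quantum ``ensemble'' language --- assigning a statistical weight to the micro states and showing that the \emph{typical} state of that ensemble lies in a single phase cell --- cannot be completed without committing to the still unknown quantum dynamics of the substratum. I would therefore present the argument as a program that reduces the conjecture to (i) existence of the continuum limit and (ii) a future specification of the microscopic measure, rather than as a closed proof.
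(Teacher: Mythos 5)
The statement you were asked about is explicitly a \emph{conjecture}: the paper offers no proof of it, only physical motivation (the order-parameter analogy, von Neumann's macro observables, and a deferral of details to future work), so the honest answer here is a reconstruction of that motivation rather than a deduction, and that is what you have produced. Your reconstruction matches the paper's own framing closely: the paper likewise identifies ``looking alike macroscopically'' with having the same continuum limit (finite, and in the limit vanishing, Gromov--Hausdorff distance), and in its concluding section it states verbatim that the roughly isometric spaces are identified with ``the phase cell of our state space which is the basin of attraction for the corresponding continuum limit.'' Your third step, the metric tensor $g(x)$ as order parameter on an order parameter manifold, is also exactly the paper's picture. The one substantive difference is how the ensemble-versus-single-state gap is handled: you defer it entirely to an unknown microscopic measure, whereas the paper partially addresses it through the typicality Observation that follows the conjecture --- micro states concentrate, with very high probability, in a small region of configuration space (Levy-type concentration inequalities, the law of large numbers, the random graph model), which is the paper's justification for working with a single typical state in place of the full ensemble. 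Folding that concentration-of-measure argument into your step (ii) would bring your program fully into line with the paper's; as it stands, you have correctly identified the open analytic content (existence of the scaling limit for physically interesting seeds, and the choice of microscopic measure) and correctly refrained from claiming a proof where the paper itself claims none.
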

To our knowledge, such a dual structure was for the first time analyzed
by v. Neumann in a beautiful paper \citep{Neumann} in the context
of the quantum theory of many DoF. It was recently translated into
English and commented upon in \citep{Lebowitz1}. A particular role
is played by the emergence of commuting macro observables, a topic
which was further developed in \citep{Decoherence}. We think, the
situation in (quantum) space-time physics is very much the same ( Recently this kind of \emph{typicality} was also studied in our context in e.g. 
\citep{Gurau1609} and \citep{Anza1605}).

In a next step we want to argue why we nevertheless are allowed to
deal in our investigation with single states, thus avoiding the intricate
study of ensembles of micro states describing S-T on microscopic scales,
and in particular, their behavior under coarse graining. This point
was also discussed for the first time in \citep{Neumann}, for a recent
discussion see e.g. \citep{Lebowitz2}. Without giving proofs (a more
systematic study can be found in \citep{Ledoux}, some examples are
also discussed in the famous section 3.5 of \citep{Gromov}) we would
like to state the following:
\begin{ob}
It is an important observation that many systems consisting of a great
number of microscopic DoF display the following property: with (very)
high probability their micro states are concentrated in an unusually
small region of phase or configuration space. This allows us to speak
of typical states and employ this concept instead of dealing with
the full ensemble of micro states in a phase cell.
\end{ob}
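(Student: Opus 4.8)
The plan is to recognize this statement as the \emph{concentration of measure} phenomenon and to reduce it to a quantitative isoperimetric estimate that sharpens as the number of degrees of freedom grows. First I would make the setup precise: model the phase cell as a metric probability space $(X_N, d_N, \mu_N)$, where $X_N$ is the configuration space of $N$ microscopic DoF, $d_N$ a suitably normalized metric, and $\mu_N$ the ensemble measure (e.g.\ the induced uniform or Gibbs measure). The informal phrase ``concentrated in an unusually small region'' is then encoded by the \emph{concentration function}
\begin{equation}
\alpha_{\mu_N}(\varepsilon) \;=\; \sup\bigl\{\, 1-\mu_N(A_\varepsilon) \;:\; A\subseteq X_N,\ \mu_N(A)\geq \tfrac{1}{2} \,\bigr\},
\end{equation}
where $A_\varepsilon$ is the $\varepsilon$-neighbourhood of $A$. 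The observation becomes the assertion that $\alpha_{\mu_N}(\varepsilon)\to 0$ rapidly (ideally like $e^{-c N \varepsilon^2}$) as $N\to\infty$, so that any $1$-Lipschitz macroscopic observable deviates from its median only with exponentially small probability. This is exactly what licenses replacing the full ensemble by a single \emph{typical} state.

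The core of the argument would be to establish such a decay for a single DoF and then \emph{tensorize}. Concretely, I would first prove a functional inequality---either a logarithmic Sobolev inequality or a transportation-cost (Talagrand) inequality---for the one-site measure, and then invoke the tensorization property of these inequalities to obtain the same inequality, with a dimension-\emph{independent} constant, for the product measure on $N$ sites. Herbst's argument then converts the log-Sobolev bound into Gaussian concentration for Lipschitz functions, yielding the desired $e^{-cN\varepsilon^2}$ tail. For the purely geometric model spaces we have in mind (high-dimensional spheres and tori, and graphs of large growth degree) one can alternatively route through L\'evy's isoperimetric inequality on $S^{n-1}$, which directly gives the sharp constant and is the prototype discussed in section~3.5 of \citep{Gromov}.

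The main obstacle is that the physically relevant states are \emph{not} product states: the DoF interact, so $\mu_N$ is a genuinely correlated Gibbs measure and naive tensorization fails. Controlling this requires a uniformity hypothesis---most cleanly a Dobrushin-type mixing or uniqueness condition---under which one can still prove a log-Sobolev or transportation inequality for the interacting measure with a constant bounded independently of $N$ (Marton's coupling argument is the natural tool here). Verifying such a mixing condition for the erratic, background-independent networks posited in our initial assumption is genuinely delicate and, in the high-connectivity or near-critical regime, may fail outright; this is precisely why we state the result as an observation rather than a theorem and defer the systematic treatment to \citep{Ledoux}. A secondary, more conceptual difficulty is the choice of the metric $d_N$ on configuration space: concentration is only as strong as the metric is ``small'', so matching the natural graph metric of the network to one in which the relevant macroscopic observables are Lipschitz is itself a modelling decision that must be fixed before any estimate can even be stated.
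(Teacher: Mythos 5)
The paper gives no proof of this Observation: it explicitly states the result ``without giving proofs,'' deferring to the L\'evy-type concentration inequalities, the law of large numbers, and the systematic treatments in Ledoux and in section 3.5 of Gromov, and noting only that the essential point is the unusually small variance of macroscopic observables. Your proposal fleshes out precisely that machinery (the concentration function, L\'evy isoperimetry, log-Sobolev/transportation inequalities with tensorization and Herbst's argument, plus the correct caveat that correlated Gibbs measures require a Dobrushin-type mixing hypothesis), so it is consistent with---and strictly more detailed than---the route the paper merely cites.
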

The basis of these results are the Levy-like inequalities and the
Levy concentration theorems. A typical example is the \textbf{law
of large numbers} and in our context the \textbf{random graph model}
as it was used in \citep{Requlumps}. The crucial point in all these
examples is not that mean value and variance of certain random functions
over some measure space do exist but that the variance is unusually
small. This implies that a random function can be replaced probabilistically
by its average, and configurations by the particular configurations
belonging to the respective mean values.

\section{Graphs as metric spaces\label{sec:Graphs-mspace}}

In this section we gather a few definitions and properties of graphs
for completeness and show that they are metric spaces, so that all
the results regarding metric spaces in this paper applies to them.

A \textbf{graph}, $G(V,E)$, consists of a set of points $V=\left\{ v_{i}\right\} $
that are its \textbf{vertices} and a set of \textbf{edges} $E=\left\{ e_{ij}\right\} \subset V\times V$
that connects some or all of the vertices. If $e_{ij}=e_{ji}$ the
graph is called \textbf{undirected} otherwise it is \textbf{directed}.
The number of edges incident on a vertex $v_{i}$ is called the \textbf{degree}
or \textbf{valency} of that edge, $\deg\left(v_{i}\right)$. If each
vertex of the graph has the same degree $k$ the graph is called a
\textbf{$k$-regular} graph and the graph itself is said to have \textbf{vertex
degree $k$}. If $\deg\left(v_{i}\right)\leq A$ for all $v_{i}\in V(G)$,
then the graph has \textbf{globally bounded} vertex degree. The degree
is called \textbf{locally bounded} if $\forall v_{i}\in V(G)\Rightarrow\deg\left(v_{i}\right)\in\mathbb{N}_{0}$.
A graph is \textbf{connected} if every pair of vertices are connected
by a finite edge sequence.

A \textbf{path} $\gamma$ is an edge sequence without repetition of
vertices with the possible exception of initial and final vertex.
The \textbf{length} $l(\gamma)$ of a path $\gamma$ is the number
of edges occurring in the path. A \textbf{geodesic path} between two
vertices is a path of minimum length. 

One can define a natural \textbf{metric on a graph} by defining the
distance between two vertices $v_{i},v_{j}$ as the length of the
geodesic path between them
\begin{equation}
d_{G}(v_{i},v_{j}):=\min_{\gamma}\{l(\gamma),\gamma\;\text{between}\;v_{i}\;\text{and}\;v_{j}\}.
\end{equation}
Then the graph $G$ together with this distance function is a metric
space $(G,d_{G})$.

Using this one can define a \textbf{closed ball} of radius $r$ centered
around $v_{i}$ in a graph as
\begin{equation}
B(v_{i},r)=\left\{ v_{j}|d\left(v_{i},v_{j}\right)\leq r\right\} ,
\end{equation}
just like the definition in any metric space. Also the \textbf{boundary
of a ball} is defined as
\begin{equation}
\partial B(v_{i},r)=\left\{ v_{j}|d\left(v_{i},v_{j}\right)=r\right\} .
\end{equation}
An \textbf{open ball} is defined as $B(v_{i},r)-\partial B(v_{i},r)$.
One then defines the number of nodes in a ball and in the boundary
of a ball as $|B(v_{i},r)|\,,\,|\partial B(v_{i},r)|$ respectively.

The \textbf{growth function} $\beta(G,v_{i},r)$ starting from a vertex
$v_{i}$ as a function of distance $r$ from it, is defined as
\begin{equation}
\beta(G,v_{i},r)=|B(v_{i},r)|.
\end{equation}
Correspondingly we define 
\begin{equation}
\partial\beta(G,v_{i},r)=\beta(G,v_{i},r)-\beta(G,v_{i},r-1)
\end{equation}
which is the difference between the number of nodes in a ball of radius
$r$ and another one of radius $r-1$ both centered at $v_{i}$. 

The growth function of a graph can have different forms, however,
an important case for our purposes is when
\begin{equation}
\beta(G,v_{i},r)\lesssim r^{\alpha},\,\,\,\,\,\,\,\,\alpha\geq0.
\end{equation}
In that case the graph $G$ is said to have \textbf{polynomial growth}.
For such a graph one can define the degree of polynomial growth as
\begin{equation}
\bar{D}(G,v_{i})=\limsup_{r\to\infty}\frac{\log\beta(G,v_{i},r)}{\log r}.
\end{equation}
If $G$ has locally bounded vertex degree, and if $\exists A,B,\alpha$
such that
\begin{equation}
Ar^{\alpha}\leq\beta(G,v_{i},r)\leq Br^{\alpha},
\end{equation}
for all $r>r_{0}$ for some $r_{0}$ and for all $A,B$ independent
of the reference point $v_{i}$, then we say $G$ has \textbf{uniform
polynomial growth}. The polynomial growth and its uniformness plays
an important role both in convergence of the coarse graining process
and in obtaining an integer dimension for spaces under considerations,
as we will see later.
\begin{bem}
For more details see e.g. \citep{Requ2} and the literature cited
there. We developed such notions already in \citep{Requ3} on physical
grounds without being aware that a similar concept was used in \textbf{geometric
group theory} (\citep{Harpe}).
\end{bem}
\begin{lemma}
For locally bounded vertex degree the exponent $\bar{D}$ is independent
of the vertex $v_{i}$, i.e. is a graph characteristic (see \citep{Requ3}).
\end{lemma}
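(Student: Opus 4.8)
The plan is to exploit the triangle inequality to compare the growth functions centered at two different vertices, and then to observe that a fixed additive shift of the radius is invisible to the logarithmic ratio defining $\bar{D}$. First I would record the two facts that make $\bar{D}(G,v_i)$ well defined. Since the graph is connected, any two vertices $v_i,v_j$ lie at a finite graph distance $D:=d_{G}(v_i,v_j)$ (a nonnegative integer). Since the vertex degree is locally bounded, every closed ball $B(v_i,r)$ contains only finitely many vertices: by induction on $r$ each step adjoins at most finitely many new neighbours to a finite set, so $\beta(G,v_i,r)=|B(v_i,r)|$ is a finite positive integer and $\log\beta$ is meaningful. Thus the hypothesis of locally bounded degree enters only to guarantee finiteness of the balls, and connectedness only to guarantee $D<\infty$.

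The key geometric step is a pair of inclusions. If $w\in B(v_i,r)$ then $d_{G}(v_j,w)\le d_{G}(v_j,v_i)+d_{G}(v_i,w)\le D+r$, so $B(v_i,r)\subseteq B(v_j,r+D)$; symmetrically $B(v_j,r-D)\subseteq B(v_i,r)$. Counting vertices yields
\begin{equation}
\beta(G,v_j,r-D)\;\le\;\beta(G,v_i,r)\;\le\;\beta(G,v_j,r+D)\qquad(r>D).
\end{equation}
Taking logarithms and dividing by $\log r$ then sandwiches the ratio for $v_i$ between two ratios for $v_j$ evaluated at the shifted radii $r\pm D$.

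It remains to take $\limsup_{r\to\infty}$ and to make the shift $\pm D$ disappear. The idea is to write
\begin{equation}
\frac{\log\beta(G,v_j,r+D)}{\log r}=\frac{\log\beta(G,v_j,r+D)}{\log(r+D)}\cdot\frac{\log(r+D)}{\log r},
\end{equation}
and to note that $\log(r+D)/\log r\to1$ as $r\to\infty$ (and likewise with $r-D$). Since the second factor tends to $1$ and the first is nonnegative, the $\limsup$ of the product equals the $\limsup$ of the first factor, which---because $s=r+D$ runs cofinally through the integers---is precisely $\bar{D}(G,v_j)$. Applying the same manipulation to the lower bound gives the reverse inequality, and the two together force $\bar{D}(G,v_i)=\bar{D}(G,v_j)$. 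As $v_i,v_j$ were arbitrary, $\bar{D}$ is a graph characteristic.

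I anticipate the only delicate point to be this last $\limsup$ manipulation. One must justify that multiplying a sequence by a factor converging to $1$ leaves the $\limsup$ unchanged; this holds for nonnegative sequences even when the $\limsup$ is infinite, via the elementary bound $(1-\varepsilon)a_r\le a_r b_r\le(1+\varepsilon)a_r$ valid once $b_r$ is close to $1$. One must also observe that replacing the radius $r$ by the integer $r+D$ does not alter the $\limsup$, because the shifted radii are cofinal in $\mathbb{N}$. Everything else is routine once the ball inclusions are in place.
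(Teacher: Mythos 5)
Your proof is correct and complete: the triangle-inequality ball inclusions $B(v_j,r-D)\subseteq B(v_i,r)\subseteq B(v_j,r+D)$, followed by the observation that the additive shift $\pm D$ of the radius is absorbed by the logarithm in the $\limsup$, is exactly the standard argument, and you handle the only delicate points (finiteness of balls from locally bounded degree, the $\limsup$ of a product with a factor tending to $1$, cofinality of the shifted radii) correctly. The paper itself gives no proof of this lemma but defers to the reference it cites, so there is nothing in the text to contrast your route with; your argument is the one that reference uses.
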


\section{A generic coarse graining scheme\label{sec:coarse}}

As mentioned in the previous sections, we are going to work in a sufficiently
large class of metric spaces which we will call $\mathcal{S}$. We
would like to introduce a \textbf{coarse graining process} $\mathcal{K}$
in $\mathcal{S}$ such that 
\begin{eqnarray}
\mathcal{K}: & \mathcal{S}\rightarrow & \mathcal{S}\\
 & \left(M_{1},d_{1}\right)\mapsto & \left(M_{2},d_{2}\right),\,\,\,\,\,\,\,\,\left(M_{1},d_{1}\right),\left(M_{2},d_{2}\right)\in\mathcal{S}.
\end{eqnarray}
As we will explain below, we choose $\mathcal{K}$ to be a process
that falls under the category of quasi-isometries. To understand what
this means, we should start by introducing some preliminary concepts.
The first concept is the notion of an isometric embedding. For brevity,
we sometimes write a metric space $\left(M,d\right)$ as just $M$
when the context is clear.

Given two metric spaces $X,Y\in\mathcal{S}$, an \textbf{isometric
embedding} $f$ of $X$ into $Y$ 
\begin{equation}
f:X\rightarrow Y,
\end{equation}
is a distance preserving map, i.e. 
\begin{equation}
d_{X}(x_{1},x_{2})=d_{Y}(f(x_{1}),f(x_{2})),\,\,\,\,\,\forall x_{1},x_{2}\in X.
\end{equation}
It is an \textbf{isometry} if it is also surjective. A weaker but
in our context much more useful and appropriate version is a quasi-isometric
embedding. A map $f:X\rightarrow Y$ between two metric spaces is
called a \textbf{quasi-isometric embedding} if $\exists\lambda\geq1,\epsilon\geq0$
such that 
\begin{equation}
\frac{1}{\lambda}d_{X}(x_{1},x_{2})-\epsilon\leq d_{Y}(f(x_{1}),f(x_{2}))\leq\lambda d_{X}(x_{1},x_{2})+\epsilon,\,\,\,\,\,\,\,\,\,\,\,\,\forall x_{1},x_{2}\in X\label{eq:def-quas-iso-embd}
\end{equation}
i.e. $\exists\lambda\geq1,\epsilon\geq0$ such that $\forall x_{1},x_{2}\in X$,
the distance between their images under $f$, is within a factor $\lambda$
and up to an additive constant of their original distances.

A quasi-isometric embedding $f:X\rightarrow Y$ is called a \textbf{quasi-isometry}
if every point $y\in Y$ lies within a constant distance $C\geq0$
of an image point, 
\begin{equation}
\forall y\in Y:\,\,\exists x\in X:\,\,d_{Y}(y,f(x))\leq C.\label{eq:def-quas-iso}
\end{equation}
Quasi-isometry allows us to compare metric spaces neglecting their
small-scale structure, i.e. it is an \emph{equivalence relation} on
metric spaces ignoring small-scale structure and just looking at the
coarse structures (see example below). Note that here the important
property is the \textbf{transitivity}, i.e., if $X,Y$ and $Y,Z$
are quasi-isometric then $X,Z$ are also quasi-isometric. In this
context one should perhaps mention the approach of \citep{Menger}
on \textbf{random metric spaces}. If in the above definition, $\lambda=1$,
then the map $f$ is called a \textbf{rough isometry}.
\begin{bem}
In the following we will usually choose for convenience $C=\epsilon$
and call it an $\epsilon$-\textbf{rough isometry}. This is useful
as we frequently study cases where both $\epsilon,C$ approach zero.
\end{bem}
Quasi-isometry can actually be restated in a symmetric way. Two metric
spaces $X$ and $Y$ are quasi-isometric if there exists $\lambda\geq1,\epsilon\geq0,\rho\geq0$
and maps 
\begin{align}
f:X\rightarrow Y, &  & g:Y\rightarrow X
\end{align}
such that for all $x,x_{1},x_{2}\in X$ and $y,y_{1},y_{2}\in Y$,
\begin{align}
d_{Y}\left(f(x_{1}),f(x_{2})\right)\leq\lambda d_{X}\left(x_{1},x_{2}\right)+\epsilon, &  & d_{X}\left(g(y_{1}),g(y_{2})\right)\leq\lambda d_{Y}\left(y_{1},y_{2}\right)+\epsilon
\end{align}
and 
\begin{align}
d_{X}\left(g\circ f(x),x\right)\leq\rho, &  & d_{Y}\left(f\circ g(y),y\right)\leq\rho.
\end{align}
Such a map $g$ is called a \textbf{quasi-inverse} (see e.g. \citep{Requ2}
or \citep{Harpe,Bridson})

As a simple but interesting example of quasi-isometry, we note that
the integer lattice $\mathbb{Z}^{n}$ is quasi-isometric to $\mathbb{R}^{n}$.
This can be shown by considering the quasi-isometry map 
\begin{align*}
f: & \mathbb{Z}^{n}\rightarrow\mathbb{R}^{n}\\
: & (x_{1},\ldots,x_{n})\mapsto(x_{1},\ldots,x_{n}),\,\,\,\,\,\,\,\,\,\,x_{i}\in\mathbb{Z}.
\end{align*}
The map is metric-preserving if we use the Euclidean metric in both
spaces. It is perhaps more natural if we employ the intrinsic \textbf{graph
metric} (defined below) on $\mathbb{Z}^{n}$, that is, the minimal
number of steps on the lattice $\mathbb{Z}^{n}$. In that case we
have a true quasi-isometry. In the former case we have $\lambda=1$
and $\epsilon=0$, while $n$-tuples $\in\mathbb{R}^{n}$ are within
a distance $C=\sqrt{\frac{n}{4}}$ of $n$-tuples $\in\mathbb{Z}^{n}$
with respect to the usual Euclidean metric. On the other hand, the
map 
\begin{align*}
g: & \mathbb{R}^{n}\rightarrow\mathbb{Z}^{n}\\
: & (y_{1},\ldots,y_{n})\mapsto(x_{1},\ldots,x_{n}),\,\,\,\,\,\,\,\,\,\,y_{i}\in\mathbb{R}
\end{align*}
rounding $n$-tuples $\in\mathbb{R}^{n}$ to the nearest $n$-tuple
$\in\mathbb{Z}^{n}$ is also a quasi-isometry. In this case, the distance
between pairs of points is changed by adding or subtracting at most
$2\sqrt{\frac{n}{4}}$. Also we can see that 
\begin{equation}
d_{X}\left(g\circ f(x),x\right)=0
\end{equation}
and 
\begin{equation}
d_{Y}\left(f\circ g(y),y\right)\leq\sqrt{\frac{n}{4}}
\end{equation}
which means that for this example, $\rho=\sqrt{\frac{n}{4}}$.

Now we are in a position to make our definition of coarse graining
more precise. In our approach, a \textbf{coarse graining map $\mathcal{K}$
is a quasi-isometry} $\mathcal{K}:X\rightarrow Y$, while not all
quasi-isometries can be regarded as coarse grainings. The particular
form of such a coarse graining has of course to be motivated by the
physical context. It implies that details or finer DoF are deleted
or summed over and that we go over to coarser substructures in each
step. Therefore not every quasi-isometry will do.

Using a coarse graining operation $\mathcal{K}$, we can generate
a sequence of metric spaces, $\left\{ \left(M_{i},d_{i}\right)\right\} ,i=0,\ldots,N$,
such that,
\begin{equation}
\left(M_{0},d_{0}\right)\stackrel{\mathcal{K}}{\longrightarrow}\left(M_{1},d_{1}\right)\stackrel{\mathcal{K}}{\longrightarrow}\left(M_{2},d_{2}\right)\stackrel{\mathcal{K}}{\longrightarrow}\cdots\stackrel{\mathcal{K}}{\longrightarrow}\left(M_{N},d_{N}\right),
\end{equation}
where all $\left(M_{i},d_{i}\right)\in\mathcal{S}$, and $\left(M_{i+1},d_{i+1}\right)$
is a coarse grained version of $\left(M_{i},d_{i}\right)$ in this
sequence. We also write
\begin{equation}
\left(M_{i+j},d_{i+j}\right)=\underbrace{\mathcal{K}(\mathcal{K}(\cdots\mathcal{K}}_{j\,\textrm{times}}((M_{i},d_{i}))))=\mathcal{K}^{j}\left(\left(M_{i},d_{i}\right)\right).
\end{equation}
Then the main idea is that after sufficiently many steps of such coarse
grainings, one will hopefully arrive at a final limit space that is
stable under further coarse graining $\mathcal{K}$, similar to a
fixed point in the renormalization group analysis\footnote{Such a program was motivated in e.g. \citep{RG}.}.
Thus we expect this ``\textbf{fixed point}'' (see below), if it
does exist, to exhibit certain \textbf{selfsimilarity} properties
as was studied in \citep{scalefree}. Furthermore, as in some of the
renormalization group construction on, say, a lattice, all the intermediate
spaces remain discrete, and it is only after a rescaling procedure
that the fixed point is expected to behave as a smooth structure,
resembling our macroscopic space-time.

One should say that a fixed point in the strict sense may not exist.
We rather expect that we have what we call a \textbf{macroscopic fixed
point} which comprises an ensemble of \textbf{microstates} looking
the same macroscopically. This macroscopic fixed point is expected
to be stable under further coarse graining transformations while the
microstates move in the class belonging to the macroscopic fixed point.

\subsection{Some concrete examples of coarse graining $\mathcal{K}$}

It is illustrative and important to describe various graph transformations
which are physically relevant to coarse graining and belong to the
class of quasi-isometries or rough isometries. Two of such operations,
\textbf{edge insertions} and \textbf{edge deletions}, we presented
in observation 2.7 of \citep{Requ2}. There we employed these processes
in the context of a notion of dimension of networks/graphs. One can
easily see that the respective proofs, which can be found in \citep{RG}
and \citep{Requ3}, apply also in our case of quasi-isometry \footnote{At the time of writing \citep{RG,Requ3} we were not aware of the
various mathematical notions we used in \citep{Requ2}.}. Below we will present a few more of such transformations that constitute
coarse graining in the sense that they represent a transition $G\to G^{\prime}$
in which some of the finer degrees of freedom in $G$ are ignored
or averaged over. 

\subsubsection{Quasi-isometry coarse graining}

\subsubsection*{$k$-local edge insertion/deletion}

A \textbf{$k$-local insertion/deletion of edges} is the insertion/deletion
of arbitrarily many edges in the $k$-neighborhood of vertices of
a locally finite graph $G$. In the case of edge deletions the procedure is slightly
more involved, that is, it refers rather to $k$-neighborhoods in
the new graph $G'$. I.e., edges are deleted between vertices which have a distance smaller than $k$ in $G'$. It is tacitly assumed that $G'$ is still connected. Note furthermore that the maximal number of possible edge insertions in a $k$-neighborhood of a vertex is bounded in a locally finite graph.
The resulting graph $G^{\prime}$ is quasi-isometric
to $G$ (for more details see Observation 2.7 in \citep{Requ2}).

\subsubsection*{Vertex contraction of diameter $\leq k$}

Another graph transformation $G\rightarrow G^{\prime}$ consists of
the following steps. In $G$, take the subgraphs $H_{i}^{G_{k}}$of
\textbf{diameter} $\leq k$ and contract them to a single vertex.
These will be the vertices $v_{i}^{\prime}\in G^{\prime}$. An edge
$e_{ij}^{\prime}\in E(G^{\prime})$, pointing from $v_{i}^{\prime}\in G^{\prime}$
to $v_{j}^{\prime}\in G^{\prime}$ is drawn in $G^{\prime}$, if there
are edges in $G$ that point from $H_{i}^{G_{k}}$ to $H_{j}^{G_{k}}$.
In case several edges in $G$ point from $H_{i}^{G_{k}}$ to $H_{j}^{G_{k}}$,
we have two options. Either they are replaced in $G'$ by a single
edge, or they can inhere an edge color based on the number (and/or
color) of edges in $G$ pointing from $H_{i}^{G_{k}}$ to $H_{j}^{G_{k}}$,
that are identified as the edge $e_{ij}^{\prime}\in G^{\prime}$.
The latter is an interesting options which makes it possible that
the coarse grained graph $G^{\prime}$ can have different (and coarse
grained) edge colors that are inherited from the underlying graph
$G$. This may have interesting and possibly deep consequences in
theories like loop quantum gravity and group field theory where the
color of edges represent the irreducible representation of the local
gauge group of the macroscopic theory.
\begin{conclusion}
All these $k$-local graph transformations lead to a quasi-isometric
new graph $G'$ and in general they are true quasi-isometries (i.e.,
not rough isometries), in particular if they are performed globally
on the infinite graph $G$.
\end{conclusion}

\subsubsection{Rough isometry coarse graining}

\subsubsection*{Clique graph transformation $\mathcal{C}(G)$}

This is the transition from a graph $G$ to its so called \textbf{clique
graph} $\mathcal{C}(G)$. A \textbf{Cliques} or a maximals \textbf{subsimplex}
in a graph $G$ is a subset of $G$ in which all the vertices are
connected with each other. Cliques and clique graphs play an important
role in graph theory (cf. e.g. \citep{Bollo} or \citep{Pisana1,Pisana2})
and we used them already in \citep{RG,Requlumps}. Our motivation
was mainly physical as we tried to find a substitute for the block
spins of the ordinary Wilsonian renormalization group in strongly
erratic and disordered systems.

The \textbf{clique graph} $\mathcal{C}(G)$ of $G$ is defined as
a graph whose vertices are the cliques of $G$, and an edge exists
between two of vertices of $\mathcal{C}(G)$ if the underlying cliques
have non-zero vertex overlap (in $G$). Notice that in this case,
the number of these vertices in $G$ that are shared between two cliques,
can give rise to colors of edges in $\mathcal{C}(G)$ in different
manners, hence leading to emergent edge colors in $\mathcal{C}(G)$.
However, in this case, these emergent colors do not seem to have anything
to do with edge colors in $G$, but rather the information related
to the vertices of $G$. Now we present an important theorem.
\begin{theo}\label{Theo:Spencer}
If $G$ has globally bounded vertex degree $deg(G)$, its clique graph
$\mathcal{C}(G)$ has also globally bounded vertex degree, and is
roughly isometric to $G$ with $C=\epsilon=1$.
\end{theo}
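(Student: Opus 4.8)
The plan is to exhibit an explicit map $f\colon V(G)\to V(\mathcal{C}(G))$ and verify directly the two defining properties of a rough isometry with $\epsilon=C=1$, while separately establishing the degree bound by counting. For the map I would send each vertex $v\in G$ to some arbitrarily chosen maximal clique $K_v$ containing it; such a clique exists because the one-vertex complete subgraph $\{v\}$ extends to a maximal one, and the choice need not be canonical, since rough isometry is insensitive to it. I would also record the easy preliminary that $G$ connected forces $\mathcal{C}(G)$ connected (so that the graph metric on $\mathcal{C}(G)$ is finite); this will fall out of the upper-bound walk construction below.

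First I would settle the bounded-degree claim, which is pure counting from the hypothesis $\deg(G)\le A$. A clique has at most $A+1$ vertices, since each of its vertices is adjacent to all the others and hence $m-1\le A$; and any clique through a fixed vertex $v$ is of the form $\{v\}\cup S$ with $S\subseteq N(v)$, so at most $2^{A}$ cliques contain $v$. Because a clique $K$ is adjacent in $\mathcal{C}(G)$ only to cliques meeting one of its at most $A+1$ vertices, this gives $\deg_{\mathcal{C}(G)}(K)\le (A+1)\,2^{A}$, the desired global bound.

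Next I would prove the two-sided metric estimate $d_G(u,v)-1\le d_{\mathcal{C}(G)}(K_u,K_v)\le d_G(u,v)+1$. For the upper bound I take a geodesic $u=w_0,w_1,\dots,w_n=v$ in $G$, choose for each edge $w_iw_{i+1}$ a maximal clique $L_i$ containing both endpoints, and note that the walk $K_u,L_0,\dots,L_{n-1},K_v$ in $\mathcal{C}(G)$ has consecutive cliques sharing a vertex ($K_u$ and $L_0$ share $w_0$, each $L_i$ and $L_{i+1}$ share $w_{i+1}$, and $L_{n-1}$ and $K_v$ share $w_n$), hence is a walk of length $n+1$. For the lower bound I take a geodesic clique-walk $K_u=C_0,\dots,C_m=K_v$, pick $x_i\in C_i\cap C_{i+1}$ for each step, and use that two vertices in a common clique are at $G$-distance $\le 1$; concatenating yields $d_G(u,v)\le 1+(m-1)+1=m+1$. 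The coverage condition is then immediate: any clique $K$ contains a vertex $w$, and $K$ and $f(w)=K_w$ share $w$, so $d_{\mathcal{C}(G)}(K,f(w))\le 1$, giving $C=1$.

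The technical heart, and the step I would watch most carefully, is pinning the additive constant at exactly $1$ on both sides rather than settling for a larger $\epsilon$. The $+1$ in the upper bound is forced because the endpoint cliques $K_u,K_v$ generally differ from the edge-cliques $L_0,L_{n-1}$, costing one extra step at each end, and the $-1$ in the lower bound reflects the same interface. I would finish by checking that the estimate degenerates harmlessly in the boundary cases $u=v$ and $u\sim v$, so that the inequalities hold verbatim and the map $f$ is genuinely an $\epsilon$-rough isometry with $\epsilon=C=1$.
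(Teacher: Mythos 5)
Your proof is correct and follows essentially the same route as the paper: map each vertex to a chosen maximal clique containing it, get the upper bound by threading edge-cliques along a geodesic, the lower bound by picking overlap vertices along a clique-path, and coverage from the shared vertex. The only difference is in the degree bound, where you use the trivial count of $2^{A}$ cliques through a vertex while the paper sharpens this to $\binom{A+1}{\lfloor (A+1)/2\rfloor}$ via the observation that maximal cliques through a vertex form a Sperner system; both suffice for the stated claim of globally bounded degree.
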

See Appendix \ref{AppSec:Proofs} for a proof.

\section{The Gromov-Hausdorff Space of Metric Spaces\label{sec:supermetric space}}

Having defined a coarse graining procedure $\mathcal{K}$ which generates
a coarse grained sequence $\left\{ \left(M_{i},d_{i}\right)\right\} _{i=0,\ldots,N}^{\mathcal{K}}$,
it is possible to introduce some kind of a distance function $d_{\mathcal{S}}$
between suitable metric spaces, with respect to which one can ask
if the sequence converges to a limit space, and to what extent two
spaces are structurally similar or different.

We begin with the simpler notion of the distance of two subsets of
a metric space. Given a metric space $\left(Z,d_{Z}\right)$ and two
of its nonempty subsets $X,Y\subset Z$, one can define a distance
between these subsets called the \textbf{Hausdorff distance} $d_{H}^{Z}$
as
\[
d_{H}^{Z}\left(X,Y\right)=\max\left\{ \sup_{x\in X}\,\inf_{y\in Y}\,d_{Z}(x,y),\sup_{y\in Y}\,\inf_{x\in X}\,d_{Z}(x,y)\right\} 
\]
or
\[
d_{H}^{Z}\left(X,Y\right)=\inf\left\{ \epsilon\geq0|X\subseteq U_{\epsilon}(Y),Y\subseteq U_{\epsilon}(X)\right\} .
\]
Here, $U_{\epsilon}(X)$ is the \textbf{$\epsilon$-neighborhood of
a subset} $X\subset Z$ of a metric space $(Z,d_{Z})$, and it is
defined as 
\[
U_{\epsilon}(X)=\bigcup_{x\in X}\left\{ z\in Z|d(z,x)\leq\epsilon\right\} ,
\]
i.e., it is the union of all $\epsilon$-balls around all $x\in X$,
or the set of all points in $Z$ that are within a distance $\epsilon$
of the set $X$, or the generalized ball of radius $\epsilon$ around
$X$. 

The Hausdorff distance makes the set of non-empty \textbf{compact}
subsets of a \textbf{complete} metric space into a complete metric
space \citep{Edgar,Bridson}. Note that on the set of all non-empty
(not necessarily compact) subsets of $Z$, in general $d_{H}^{Z}$
only defines a \textbf{pseudometric}, viz., with $A,B\subset Z$,
then $d_{H}^{Z}(A,B)=0$ does not necessarily mean $A\neq B$.

With the notions of the Hausdorff distance and isometric embedding,
Gromov \citep{Gromov,Bridson,Petersen} was able to develop a distance
concept between two arbitrary compact metric spaces. This distance
is called the \textbf{Gromov-Hausdorff distance $d_{GH}$} between
two compact metric spaces $X,Y$ and is defined as
\begin{equation}
d_{GH}\left(X,Y\right)=\inf\,d_{H}^{Z}\left(f(X),g(Y)\right)\label{eq:GH-dist-1st-def}
\end{equation}
for all metric spaces $Z$ and all isometric embeddings $f:X\rightarrow Z$
and $g:Y\rightarrow Z$. Equivalently it is defined as
\begin{equation}
d_{GH}\left(X,Y\right)=\inf\,d_{H}^{X\sqcup Y}\left(f(X),g(Y)\right),\label{eq:GH-dist-2nd-def}
\end{equation}
where $X\sqcup Y$ is the disjoint union of $X,Y$, and the metric
$d_{H}^{X\sqcup Y}$ extending the metrics on $f(X),g(Y)$. This latter
version was particularly used in \citep{Petersen}. 

This distance has some interesting properties:
\begin{enumerate}
\item On the set of all isometry classes of compact metric spaces, it provides
a metric\footnote{cf. lemma 4.6 in \citep{Requ2}.}. This set together
with $d_{GH}$ is a complete metric space called the \textbf{Gromov-Hausdorff
space} \citep{Petersen}. 
\item In general, however, it provides only a pseudo-metric. This holds
for example if $X$ is dense in $Y$, or if $X,Y$ are isometric.
\item It measures how far two compact metric spaces are from being isometric,
i.e. $X,Y$ are isometric iff $d_{GH}(X,Y)=0$. It is is in fact a
measure of metric similarity.\label{enu:dGH-iso-distance}
\item It defines a notion of convergence for sequences of compact metric
spaces, called the \textbf{Gromov-Hausdorff convergence}.
\end{enumerate}
In the Gromov-Hausdorff space, a metric space to which a sequence
of compact metric spaces converges in $d_{GH}$ is called its \textbf{Gromov-Hausdorff
limit}\footnote{For brevity we may use GH instead of Gromov-Hausdorff in some parts
of the text.}.

We now see that given these notions, one can (in principle) check
if in the Gromov-Hausdorff space, a sequence of metric spaces, generated
by a concept of coarse graining $\mathcal{K}$, or \textbf{scale transformation}
(see below), has a Gromov-Hausdorff limit. 

However, many of the spaces that are interesting physically are non-compact
and hence do not belong to the Gromov-Hausdorff space. Thus we need
to somehow extend these notions to sequences of coarse grained spaces,
$\left\{ \left(M_{i},d_{i}\right)\right\} _{i=0,\ldots,N}^{\mathcal{K}}$,
that are non-compact, at least to a relevant subset of these non-compact
metric spaces. This was also done by Gromov. More precisely, he extended
this notion of convergence in $d_{GH}$, to non-compact but locally
compact metric spaces $\mathcal{S}$. To see this, we need the concept
of pointed metric spaces:
\begin{defi}
A \textbf{locally compact complete} metric space $\left(X_{i},d_{i}\right)$
with a distinguished point $x_{i}\in X_{i}$, is called a pointed
metric space. It is denoted by $\left(\left(X_{i},d_{i}\right),x_{i}\in X_{i}\right)$.
\end{defi}
The extension of the Gromov-Hausdorff convergence to $\mathcal{S}$
is then done in the following way: 
\begin{defi}
A sequence of pointed metric spaces $\left\{ \left(\left(X_{i},d_{i}\right),x_{i}\in X_{i}\right)\right\} $
is defined to converge to $\left(\left(X,d_{X}\right),x\in X\right)$,
if for all $r>0$, the sequence of closed $r$-balls, $\left\{ B(x_{i},r)\subset X_{i}\right\} $,
converges to $B(x,r)\subset X$ in $d_{GH}$. This is called \textbf{pointed
GH-convergence}.
\end{defi}
Thus in effect, if we have a sequence of non-compact but locally compact
metric spaces, we are still able to draw a conclusion if they converge
to a locally compact space in pointed GH-sense. We take this space
of non-compact but locally compact metric spaces as our super space
$\mathcal{S}$ with which we will work. It is quite similar in concept, to the ``theory space'' of the renormalization methods. As we will see in section
\ref{sec:Converg-2}, this extension of Gromov-Hausdorff convergence
to this superspace  plays a crucial role in our enterprise. It is
worth mentioning that two very interesting types of non-compact spaces
that are locally compact are $\mathbb{R}^{n}$, and consequently finite
dimensional non-compact topological manifolds, since they share the
local properties of the Euclidean spaces\footnote{$\mathbb{R}^{n}$ is locally compact due to the Heine-Borel theorem}.

The concepts of GH-convergence or GH-distance seem to be quite abstract
compared to the simpler concept of the Hausdorff-distance, but note
that it yields much more detailed information about the structure
of the spaces under discussion. Hausdorff-distance simply measures
the metrical distance of sets as subsets of a larger metric space.
However, the GH-distance, due to the incorporation of \emph{all admissible
Hausdorff distances}, actually measures the structural relatedness
(or similarity) of spaces. This is much more specific, and in line
with the general notion of coarse graining in physics.

Due to computational complications, it is seldom possible to calculate
the exact GH-distance between two metric spaces\footnote{This is a consequence of the need to incorporating all admissible
metrics (Hausdorff distances), which ironically makes the $d_{GH}$
notion so powerful.}. However, it is usually possible and sufficient to obtain efficient
upper bounds. In this respect the second version of GH-distance, equation
(\ref{eq:GH-dist-2nd-def}), is quite useful\footnote{To get more acquainted with the technical subtleties see section 4
of \citep{Requ2}. The crucial point is always the verification of
the \emph{triangle inequality}.}. 

To get a better feeling how quasi-isometry or rough isometry and GH-distance
are structurally related we present an important theorem\footnote{This is the theorem 4.15 in \citep{Requ2}. It is also contained in
the form given here in \citep{Lochmann}} that has far reaching consequences and plays an important role in
discussions about the convergence.
\begin{theo}
Two metric spaces X,Y, have finite GH-distance iff they are roughly
isometric (i.e., with the $\lambda=1$ and $C=\epsilon$ in (\ref{eq:def-quas-iso-embd})
and (\ref{eq:def-quas-iso})). Furthermore it holds in particular:
\begin{equation}
\frac{1}{2}\,d_{GH}(X,Y)\leq\inf\{\epsilon\}\leq2\,d_{GH}(X,Y)
\end{equation}
with the $\epsilon$'s belonging to $\epsilon$-rough isometries between
$X,Y$.\label{rough}
\end{theo}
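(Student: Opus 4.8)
The plan is to prove the equivalence together with both quantitative bounds by establishing two independent one-sided estimates, working throughout with the common-embedding / disjoint-union form of $d_{GH}$ in (\ref{eq:GH-dist-2nd-def}); this is the notion for which ``finite GH-distance'' is the natural statement, and it is meaningful for the non-compact (merely locally compact) $X,Y$ at hand without having to pass through pointed convergence. Concretely, I would show: (i) every $\epsilon$-rough isometry $f:X\to Y$ (so $\lambda=1$, $C=\epsilon$) forces $d_{GH}(X,Y)\le\tfrac{3}{2}\epsilon$, hence $d_{GH}<\infty$ and $\tfrac12 d_{GH}(X,Y)\le\epsilon$; and (ii) whenever $d_{GH}(X,Y)<\infty$ one can manufacture, for every $\eta>0$, a rough isometry with constant $<2\bigl(d_{GH}(X,Y)+\eta\bigr)$. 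Taking the infimum over admissible $\epsilon$ in (i) gives $\tfrac12 d_{GH}\le\inf\{\epsilon\}$, while (ii) gives $\inf\{\epsilon\}\le 2\,d_{GH}$; together they yield the sandwich and, a fortiori, the biconditional.

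For direction (i) I would realize the gluing explicitly. On $Z=X\sqcup Y$ keep $d_X$ on $X$ and $d_Y$ on $Y$, and define the cross-distances by $\rho(x,y)=\inf_{x'\in X}\bigl(d_X(x,x')+\tfrac{\epsilon}{2}+d_Y(f(x'),y)\bigr)$. The offset $\tfrac{\epsilon}{2}$ is not cosmetic: it is exactly the amount needed so that the \emph{lower} rough-isometry bound $d_Y(f(x'),f(x''))\ge d_X(x',x'')-\epsilon$ prevents a detour through the other component from shortcutting an intrinsic distance. The key inequalities to verify are $\rho(x_1,y)+\rho(y,x_2)\ge d_X(x_1,x_2)$ and $\rho(y_1,x)+\rho(x,y_2)\ge d_Y(y_1,y_2)$; both reduce, after applying the triangle inequality in each factor and the lower rough-isometry bound, to the manifestly true statement that the total offset $\tfrac\epsilon2+\tfrac\epsilon2=\epsilon$ absorbs the rough-isometry slack $\epsilon$. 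Granting this, $\rho$ restricts to $d_X,d_Y$ and is a genuine metric once $\epsilon>0$ (all cross-distances are $\ge\tfrac{\epsilon}{2}$). Taking $x'=x$ gives $\rho(x,f(x))\le\tfrac{\epsilon}{2}$, so $X\subseteq U_{\epsilon/2}(Y)$; and the near-surjectivity clause places every $y$ within $\epsilon$ of some $f(x)$, whence $\rho(y,x)\le\tfrac{3\epsilon}{2}$ and $Y\subseteq U_{3\epsilon/2}(X)$. Therefore $d_{GH}(X,Y)\le d_H^{X\sqcup Y}(X,Y)\le\tfrac{3}{2}\epsilon\le 2\epsilon$.

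For direction (ii) I would run this construction in reverse. Fix $\eta>0$; by definition of $d_{GH}$ there is a metric space $Z$ and isometric copies of $X,Y$ inside it with $d_H^Z(X,Y)<d_{GH}(X,Y)+\eta=:r$. Define $f:X\to Y$ by assigning to each $x$ a point $f(x)\in Y$ with $d_Z(x,f(x))<r$ (such a point exists since $X\subseteq U_r(Y)$; as the spaces are merely locally compact the infimum need not be attained, so I take an $\eta$-approximate choice and invoke choice to make the selection). Two triangle inequalities in $Z$, using $d_Z(x_i,f(x_i))<r$, give $|d_Y(f(x_1),f(x_2))-d_X(x_1,x_2)|<2r$, i.e. the $\lambda=1$ embedding bound with $\epsilon=2r$. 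For near-surjectivity, any $y\in Y$ satisfies $y\in U_r(X)$, so some $x$ has $d_Z(x,y)<r$, and then $d_Y(y,f(x))\le d_Z(y,x)+d_Z(x,f(x))<2r$; thus $C=2r$ as well. Hence $f$ is a $2r$-rough isometry, and letting $\eta\to0$ yields $\inf\{\epsilon\}\le 2\,d_{GH}(X,Y)$.

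The main obstacle is the triangle-inequality verification in (i): one must check that the inf-convolution cross-distance neither collapses distances nor creates shortcuts, and it is precisely there that the two-sided rough-isometry estimate pins the offset at $\tfrac{\epsilon}{2}$. A cleaner but less self-contained route would be to invoke the correspondence description $d_{GH}=\tfrac12\inf_R \mathrm{dis}(R)$ and read both bounds off the correspondence $R=\{(x,y):d_Y(f(x),y)\le\epsilon\}$; I would relegate that to a remark, since the excerpt supplies only the embedding definition. The remaining care-points are minor: handling non-attained infima for these non-compact spaces via approximate choices, and noting that the degenerate case $\epsilon=0$ reproduces an honest isometry with $d_{GH}=0$, consistent with the endpoints of the sandwich.
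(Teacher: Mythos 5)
The paper does not actually prove Theorem \ref{rough}; it defers entirely to Theorem 4.15 of the cited reference \citep{Requ2} (and to \citep{Lochmann}), so there is no in-paper argument to compare against line by line. Your proposal is, however, correct and self-contained, and it is essentially the standard two-sided argument (the one behind Corollary 7.3.28 in Burago--Burago--Ivanov): in one direction an explicit admissible metric on $X\sqcup Y$ built by inf-convolution with offset $\tfrac{\epsilon}{2}$, giving $d_{GH}(X,Y)\le\tfrac{3}{2}\epsilon\le 2\epsilon$; in the other an approximate nearest-point selection inside a near-optimal common embedding, giving a $2r$-rough isometry with $r=d_{GH}(X,Y)+\eta$. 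I checked the two nontrivial triangle inequalities for your cross-distance $\rho$: both reduce, exactly as you say, to the lower distortion bound $d_Y(f(x'),f(x''))\ge d_X(x',x'')-\epsilon$ being absorbed by the total offset $\epsilon$, and the remaining mixed triples follow from the inf-convolution structure; the Hausdorff estimates $\rho(x,f(x))\le\tfrac{\epsilon}{2}$ and $\rho(y,x)\le\tfrac{3\epsilon}{2}$ are right, as is the $|d_Y(f(x_1),f(x_2))-d_X(x_1,x_2)|<2r$ computation and the near-surjectivity bound $C=2r$ in the converse. Two cosmetic points only: the theorem as stated needs just $d_{GH}\le 2\epsilon$, so your sharper $\tfrac{3}{2}\epsilon$ is a bonus; and your care about non-attained infima (approximate choices for non-compact, merely locally compact spaces) is exactly the right precaution for the superspace $\mathcal{S}$ used in this paper, where the compactness hypotheses of the textbook versions are not available.
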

The proof along with a brief discussion can be found in \citep{Requ2}. 

An implication of this theorem is that since a finite GH-distance
between $X,Y$, is equivalent to $X,Y$ being $\epsilon$-roughly
isometric, 
\begin{equation}
|d_{Y}(f(x),f(x'))-d_{X}(x,x')|\leq\epsilon,
\end{equation}
then quasi-isometric spaces, when the quasi-isometry is not a rough
isometry, have an infinite GH-distance. That means if $\mathcal{K}$
is such an operation performed on space $G$, then $d_{GH}(G,\mathcal{K}(G))=\infty$
and thus $G$ and $\mathcal{K}(G)$ are structurally quite different
metric-wise. They are in fact infinitely apart from being isometric
(see item \ref{enu:dGH-iso-distance} above). Another observation
is that, a sequence
\begin{equation}
\left\{ G,\mathcal{K}(G),\mathcal{K}\left(\mathcal{K}(G)\right),\ldots\right\} 
\end{equation}
for which all the steps of coarse graining are true quasi-isometries
cannot converge in $d_{GH}$. We will discuss this point in more detail
in the following sections. Note that a point in the coarse graining
sequence where true quasi-isometry changes to rough isometry may be
called a geometric phase transition point as we recognize a transition
from structural dissimilar to structural similar spaces.

Another illuminating observation is the following. Assume that we
have a sequence of metric spaces, $\left\{ X_{\nu}\right\} $, which
converges in $d_{GH}$ towards some metric space $X$. Then by the
same token, there exists a sequence of rough $\epsilon_{\nu}$-isometries,
$f_{\nu}$, between $X_{\nu}$ and $X$ with $\epsilon_{\nu}\to0$.
This implies (by definition) that
\begin{equation}
\lim_{\epsilon_{\nu}\to0}f_{\nu}(X_{\nu})\;\text{is dense in}\;X
\end{equation}
with
\begin{equation}
\lim_{\epsilon_{\nu}\to0}\;d_{X}(f_{\nu}(x_{\nu}),f_{\nu}(x'_{\nu}))=\lim_{\epsilon_{\nu}\to0}\;d_{X_{\nu}}(x_{\nu},x'_{\nu}).
\end{equation}
Thus we say that in the limit $\epsilon_{\nu}\to0$, the spaces $X_{\nu},X$
become \textbf{essentially isometric}, in order to have a label for
this asymptotic behavior.

\section{Convergence and Continuum Limit I\label{sec:Converg-1}}

The deep question of under what physical and/or mathematical conditions
a sequence of coarse-grained spaces has a macroscopic or continuum
limit, will be postponed to the next section. In this section we will
discuss various topics related to the rescaling of our metric spaces,
and the fixed point in this rescaling process (some results can also
be found in \citep{Requ2,Lochmann}).

Since we will be using some concepts from dynamical systems, let us
begin by some related definitions. In a dynamical system with a space
of admissible states (or phase space) $H$, an \textbf{attractor}
or an \textbf{attracting set} is, roughly speaking, a closed subset
$A\subset H$ such that for many choices of initial states $I_{A}\subset H$,
the system will eventually evolve to $A$. The set of initial conditions
$I_{A}$ for which the system's state eventually evolves to $A$ is
called the \textbf{basin of attraction} of $A$. 

Now we explore the relation of the above notions with our framework.
If we have a metric space, $X$, and a metric $d_{X}$ on it, we can
define, in a canonical way, a whole sequence of \textbf{scaled} metrics
$\lambda\cdot d_{X}$ with $\lambda\in\mathbb{R}^{+}$. The limit
$\lambda\to0$ corresponds to the large scale structure of $X$, while
$\lambda\to\infty$ reveals the fine structure of $X$ by magnifying
the infinitesimal neighborhoods of the points of $X$. In our context
the limit $\lambda\to0$ is of particular importance.

We assume that the GH-limit
\begin{equation}
\left(X_{\infty},d_{\infty}\right):=\lim_{\lambda\to0}(X,\lambda d_{X})
\end{equation}
exists with metric $d_{\infty}=\lim_{\lambda\to0}\lambda d_{X}$,
and we want to infer some general properties of this limit space.
\begin{ob}
From what we have learned in the last section, all spaces, \\
$\left\{ \left(X^{\prime},d_{X^{\prime}}\right)|d_{GH}\left(X^{\prime},X\right)<\infty\right\} $
have the same limit $\left(X_{\infty},d_{\infty}\right)$. Furthermore,
it is easy to see that $\left(X_{\infty},d_{\infty}\right)$ is the
only scaling limit in this set.
\end{ob}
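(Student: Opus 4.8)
The plan is to reduce everything to Theorem \ref{rough} together with the behaviour of rough isometries under rescaling. First I would note that $d_{GH}(X', X) < \infty$ means, by Theorem \ref{rough}, that $X'$ and $X$ are $\epsilon$-roughly isometric for some finite $\epsilon \geq 0$; fix a witnessing map $f: X' \to X$ obeying $|d_X(f(x'_1), f(x'_2)) - d_{X'}(x'_1, x'_2)| \leq \epsilon$ together with the associated $\epsilon$-density condition. The elementary but decisive observation is that the very same map $f$ is a $\lambda\epsilon$-rough isometry between the rescaled spaces $(X', \lambda d_{X'})$ and $(X, \lambda d_X)$: multiplying the defining inequality by $\lambda$ turns the additive defect $\epsilon$ into $\lambda\epsilon$, and the density constant $C=\epsilon$ likewise becomes $\lambda\epsilon$. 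Feeding this back into the quantitative half of Theorem \ref{rough} gives
\begin{equation}
d_{GH}\big((X', \lambda d_{X'}), (X, \lambda d_X)\big) \leq 2\lambda\epsilon .
\end{equation}

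Since the spaces are non-compact I would then pass to the pointed picture. Write $B_\lambda(r)$, $B'_\lambda(r)$ for the closed $r$-balls about corresponding basepoints in $(X,\lambda d_X)$ and $(X',\lambda d_{X'})$, and $B_\infty(r)$, $B'_\infty(r)$ for the closed $r$-balls in the assumed scaling limits $X_\infty$ and $X'_\infty$. Because the scaled ball of radius $r$ is the original ball of radius $r/\lambda$, the restriction of $f$ to such a ball is still a rough isometry with the same additive defect, i.e. $\lambda\epsilon$ in the scaled metric (the mismatch between the $f$-image of an $r$-ball and an honest $r$-ball is of order $\lambda\epsilon$ and is absorbed by the density condition). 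Hence $d_{GH}(B'_\lambda(r), B_\lambda(r)) \to 0$ as $\lambda \to 0$ for every fixed $r$. The triangle inequality for $d_{GH}$ then yields
\begin{equation}
d_{GH}\big(B'_\infty(r), B_\infty(r)\big) \leq d_{GH}\big(B'_\infty(r), B'_\lambda(r)\big) + d_{GH}\big(B'_\lambda(r), B_\lambda(r)\big) + d_{GH}\big(B_\lambda(r), B_\infty(r)\big).
\end{equation}
As $\lambda \to 0$ the first and third terms vanish by the definition of pointed GH-convergence, while the middle term vanishes because $\epsilon$ is fixed. Since $r>0$ is arbitrary, the two pointed limits coincide, i.e. $d_{GH}(X'_\infty, X_\infty) = 0$, establishing the first assertion.

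For the second assertion --- that $(X_\infty, d_\infty)$ is the \emph{only} scaling limit in the class --- I would invoke that the GH-space is a genuine metric space, so pointed GH-limits are unique up to isometry. The computation above shows that every $X'$ with $d_{GH}(X', X) < \infty$ produces $X_\infty$ as its scaling limit, so no further limit point can arise within the class.

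I expect the principal obstacle to lie in the pointed reduction rather than in the algebra: one must verify that restricting $f$ to balls of radius $r/\lambda$ genuinely preserves the rough-isometry constants at the ball level, keeping the $O(\lambda\epsilon)$ radius discrepancy between $f$-images of $r$-balls and honest $r$-balls under control, and one must justify the interchange of the two limiting operations --- the $\lambda \to 0$ that defines each $X_\infty$ and the $\lambda \to 0$ that shrinks the defect $2\lambda\epsilon$. The triangle-inequality estimate above is precisely the device that makes this interchange rigorous.
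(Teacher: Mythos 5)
Your argument is correct and is essentially the route the paper intends: the paper leaves this Observation unproved, relying on Theorem \ref{rough} together with the scaling identity $d_{GH}(\lambda X,\lambda X')=\lambda\, d_{GH}(X,X')$ (stated as a lemma in Section \ref{sec:Geo-Renorm}), which immediately gives $d_{GH}(\lambda X',\lambda X)\to 0$ and hence a common limit, with uniqueness following from $d_{GH}$ being a metric on isometry classes. Your detour through the $\lambda\epsilon$-rough-isometry constants recovers the same bound up to a constant factor, and your explicit pointed-ball triangle-inequality step merely supplies detail the paper glosses over.
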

This set of spaces, $\left\{ \left(X^{\prime},d_{X^{\prime}}\right)\right\} $
(including $\left(X,d_{X}\right)$ itself), is the basin of attraction
of the attractor $\left(X_{\infty},d_{\infty}\right)$ under the evolution
map
\begin{equation}
\phi_{\lambda}:\;(X',d_{X'})\,\longmapsto(X',\lambda d_{X'})
\end{equation}
for $\lambda\to0$. This limit space $\left(X_{\infty},d_{\infty}\right)$
has the following nice property:
\begin{ob}
$\left(X_{\infty},d_{\infty}\right)$ is \textbf{scale invariant}
under every scaling map 
\begin{equation}
\phi_{l}:\;(X_{\infty},d_{\infty})\longmapsto(X_{\infty},ld_{\infty})
\end{equation}
in the sense that 
\begin{equation}
d_{GH}(X_{\infty},lX_{\infty})=0.
\end{equation}
This implies that there exists an essentially isometric map for every
$\phi_{l}$ and, as a consequence, a scaling map, $f_{l}$, from $X_{\infty}\to X_{\infty}$,
i.e., we have
\begin{equation}
d_{\infty}(x,x')=l\cdot d_{\infty}(f_{l}(x),f_{l}(x')).
\end{equation}
\end{ob}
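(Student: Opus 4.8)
The plan is to exploit the fact that the scaling limit $(X_\infty,d_\infty)$ is itself defined as a $\lambda\to 0$ limit, so that an additional rescaling by a fixed factor $l>0$ can be absorbed into a reparametrization of $\lambda$. I would begin from the defining relation $d_\infty=\lim_{\lambda\to 0}\lambda d_X$ (in the pointed GH-sense) and note that $l\,d_\infty=\lim_{\lambda\to 0}(l\lambda)\,d_X$. Substituting $\mu:=l\lambda$, and using that $l$ is a fixed positive constant so that $\mu\to 0$ precisely when $\lambda\to 0$, the right-hand side becomes $\lim_{\mu\to 0}\mu\,d_X=d_\infty$. Thus $(X_\infty,l\,d_\infty)$ and $(X_\infty,d_\infty)$ are GH-limits of one and the same net of rescaled spaces, differing only by a harmless change of the scaling parameter.

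To make the step of pulling $l$ through the limit rigorous, I would use that the global rescaling operation $\phi_l$ acts multiplicatively on the Gromov-Hausdorff distance, $d_{GH}(\phi_l(A),\phi_l(B))=l\,d_{GH}(A,B)$ for any two spaces $A,B$. This is immediate from the definition (\ref{eq:GH-dist-1st-def}): scaling every admissible ambient metric by $l$ scales every Hausdorff distance by $l$ and hence scales their infimum by $l$. Consequently $\phi_l$ is continuous on the GH-space and commutes with taking limits, so applying it term by term to the convergent net $(X,\lambda d_X)\to(X_\infty,d_\infty)$ gives $(X,l\lambda d_X)\to(X_\infty,l\,d_\infty)$. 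Combining this with the reparametrization of the previous paragraph, both $(X_\infty,d_\infty)$ and $(X_\infty,l\,d_\infty)$ are limits of the same net; since limits in the (pointed) GH-space are unique up to isometry, I conclude $d_{GH}(X_\infty,lX_\infty)=0$, which is exactly the asserted scale invariance.

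It then remains to convert $d_{GH}(X_\infty,lX_\infty)=0$ into the existence of the scaling map $f_l$ satisfying $d_\infty(x,x')=l\,d_\infty(f_l(x),f_l(x'))$. Here I would invoke the notion of essential isometry from the end of Section \ref{sec:supermetric space}: a vanishing GH-distance produces a net of rough $\epsilon_\nu$-isometries with $\epsilon_\nu\to 0$ whose limit is an essentially isometric map, and any map realizing an isometry between $(X_\infty,d_\infty)$ and $(X_\infty,l\,d_\infty)$ is precisely a map $f_l$ with $(l\,d_\infty)(f_l(x),f_l(x'))=d_\infty(x,x')$, i.e. the claimed relation.

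I expect the genuine obstacle to lie in this last step rather than in the formal limit manipulation. In the non-compact, locally compact, pointed setting, GH-distance zero does not automatically deliver a true isometry --- this is exactly the caveat that spaces of vanishing GH-distance need not be isometric --- so one must work through the essential-isometry construction, respecting basepoints and verifying that the limiting map is well defined on all of $X_\infty$. The care required is in checking that the rough $\epsilon_\nu$-isometries can be chosen compatibly on the exhausting family of closed $r$-balls (noting that rescaling the metric by $l$ simply reparametrizes ball radii from $r$ to $r/l$) so that their limit assembles into a single scaling map $f_l$ defined on the whole space.
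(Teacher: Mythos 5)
Your proposal is correct and follows essentially the same route as the paper: the paper's own two-line proof is exactly the reparametrization $\lambda\mapsto l\lambda$ combined with pulling the constant $l$ out of the GH-limit, giving $(X_\infty,d_\infty)=l\cdot(X_\infty,d_\infty)$ up to GH-distance zero. You additionally justify the interchange of $\phi_l$ with the limit via the scaling identity $d_{GH}(lA,lB)=l\,d_{GH}(A,B)$ (which the paper states as a separate lemma only later) and flag the genuine subtlety in extracting the map $f_l$ in the non-compact pointed setting, both of which the paper leaves implicit.
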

\begin{proof}
With $\lim_{\lambda\to0}(X,\lambda d_{X})=\left(X_{\infty},d_{\infty}\right)$,
it holds that $\lim_{l\lambda\to0}(X,l\lambda d_{X})=\left(X_{\infty},d_{\infty}\right)$
in GH-sense. On the other hand, we have $\lim_{l\lambda\to0}(X,l\lambda d_{X})=l\cdot\lim_{\lambda\to0}(X,\lambda d_{X})=l\cdot\left(X_{\infty},d_{\infty}\right)$.
\end{proof}
We now see the following:
\begin{conclusion}
One may call $\left(X_{\infty},d_{\infty}\right)$ a \textbf{fixed
point} of the scaling map $\phi_{\lambda}$ for $\lambda\to0$ in
its basin of attraction given by $\left\{ \left(X^{\prime},d_{X^{\prime}}\right)|d_{GH}\left(X^{\prime},X\right)<\infty\right\} $.
\end{conclusion}
Some examples of scale invariant spaces are e.g. $\mathbb{R}^{n}$
or various \textbf{fractal spaces}.

\section{Convergence and Continuum Limit II\label{sec:Converg-2}}

In this section we want to develop criteria under which a sequence
of metric spaces has a limit space under $\mathcal{K}$. Crucial in
this respect is the Gromov-compactness theorem. As this argument is
quite intricate and was already discussed in \citep{Requ2} we will
only briefly recapitulate the relevant points for the sake of completeness
and will, refer the reader to \citep{Requ2} for more details.

A family of compact spaces, $X_{\lambda}$, is called \textbf{uniformly
compact} if their diameters are uniformly bounded and if for each
$\epsilon>0$, $X_{\lambda}$ is coverable by $N_{\epsilon}<\infty$
balls of radius $\epsilon$ independent of the index $\lambda$. We
then have the fundamental result, derived by Gromov \citep{Gromov1}:
\begin{theo}
A sequence of metric spaces $\left\{ \left(X_{i},d_{i}\right)\right\} $
contains a convergent subsequence in $d_{GH}$, iff it is uniformly
compact.
\end{theo}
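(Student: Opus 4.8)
The statement is Gromov's compactness theorem, and since it is an ``iff'' I would prove the two implications separately, with essentially all of the nontrivial content residing in the direction \emph{uniformly compact $\Rightarrow$ existence of a convergent subsequence}. The overall plan is a diagonal extraction performed on finite $\epsilon$-nets, combined with the completeness of the Gromov-Hausdorff space of compact metric spaces already invoked in the previous section.

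For the easy direction, suppose a subsequence $X_{i_k}$ converges in $d_{GH}$ to a compact limit $X$. Since $X$ is compact it has finite diameter and, for each $\epsilon>0$, a finite $\epsilon/2$-net of some cardinality $N$. For $k$ large enough the essentially isometric correspondence between $X_{i_k}$ and $X$ supplied by convergence (cf. Theorem~\ref{rough} and the ``essentially isometric'' discussion) transports this net to an $\epsilon$-net in $X_{i_k}$ and keeps the diameters uniformly bounded. Hence the tail of the convergent subsequence is uniformly compact, which is the content one needs in this direction; strictly, this is the statement that uniform compactness is \emph{necessary} for GH-precompactness.

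For the hard direction, fix the scales $\epsilon_m=1/m$. Uniform compactness gives, for every $m$, an integer $N_m$ and, for every $i$, an $\epsilon_m$-net $S_i^m\subset X_i$ with $|S_i^m|\le N_m$; the uniform diameter bound $D$ forces every mutual distance inside $S_i^m$ to lie in the compact interval $[0,D]$. I would encode each finite net by its distance matrix, a point of the compact cube $[0,D]^{N_m\times N_m}$. A standard diagonal argument then yields a single subsequence along which, for every fixed $m$, the distance matrices of $S_i^m$ converge; by passage to the limit in the metric axioms these limiting matrices define genuine finite metric spaces $S_\infty^m$.

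The key step is then to show that this diagonal subsequence is Cauchy in $d_{GH}$. Given $\eta>0$ I would choose $m$ with $\epsilon_m<\eta/4$; because $S_i^m$ is an $\epsilon_m$-net in $X_i$ one has $d_{GH}(X_i,S_i^m)\le\epsilon_m$, while convergence of the distance matrices makes $d_{GH}(S_i^m,S_j^m)$ arbitrarily small for large $i,j$. The triangle inequality for $d_{GH}$ then bounds $d_{GH}(X_i,X_j)$ by roughly $2\epsilon_m$ plus a vanishing term, establishing the Cauchy property, and completeness of the Gromov-Hausdorff space supplies the compact limit. The main obstacle I anticipate is the bookkeeping in this last step: one must make the nets compatible across scales (for instance by arranging the cardinalities to be exactly $N_m$ through harmless repetition of points, or by passing to nested nets) so that the diagonal limits $S_\infty^m$ fit together coherently, and one must verify carefully that an $\epsilon_m$-net really controls $d_{GH}(X_i,S_i^m)$, which is cleanest via the second, disjoint-union form of the Gromov-Hausdorff distance in~(\ref{eq:GH-dist-2nd-def}).
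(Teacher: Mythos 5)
Your proposal is correct and coincides with the route the paper itself points to: the paper offers no proof of this theorem beyond citing \citep{Gromov1,Petersen,Bridson} and remarking that the proofs use an \emph{Arzela-Ascoli-Cantor-diagonal-sequence}-like argument, which is precisely the diagonal extraction on finite $\epsilon$-nets (with distance matrices in compact cubes, the Cauchy estimate via $d_{GH}(X_i,S_i^m)\leq\epsilon_m$, and completeness of the Gromov-Hausdorff space) that you sketch. The only refinements worth noting are that the limiting distance matrices a priori define pseudometrics that must be quotiented to genuine finite metric spaces, and that the easy direction, as you correctly observe, really establishes necessity of uniform compactness for the convergent subsequence rather than for the original family.
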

The proofs typically use an \emph{Arzela-Ascoli-Cantor-diagonal-sequence}-like
argument \citep{Gromov1,Petersen,Bridson}. This theorem can immediately
be extended to the sequences of pointed metric spaces in $\mathcal{S}$.
\begin{theo}
If for all $r$ and $\epsilon>0$ the balls $B\left(x_{i},r\right)$
of a given sequence of proper metric spaces $\left\{ (X_{i},x_{i}\in X_{i})\right\} $
are uniformly compact, then a subsequence of spaces converges in pointed
GH-sense.
\end{theo}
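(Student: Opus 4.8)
The plan is to reduce the pointed statement to the compact Gromov compactness theorem just established, and then to reassemble the limiting balls into a single pointed limit space by a direct-limit (gluing) construction. First I would fix an increasing sequence of radii $r_n\to\infty$, say $r_n=n$. For each fixed $n$, the hypothesis guarantees that the family of closed balls $\{B(x_i,r_n)\}_i$ is uniformly compact, so the preceding theorem yields a subsequence along which $B(x_i,r_n)$ converges in $d_{GH}$ to some compact space $Y_n$. Applying this successively for $n=1,2,\ldots$ and passing to a Cantor diagonal subsequence, which I relabel, I obtain a single subsequence such that $B(x_i,r_n)\to Y_n$ for every $n$ simultaneously. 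This step is exactly the \emph{Arzela-Ascoli-Cantor} diagonal argument already invoked for the compact case.

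The core of the proof is to show that the limit balls $Y_n$ are \emph{compatible}, i.e. that $Y_n$ embeds isometrically into $Y_{n+1}$ as the ball of radius $r_n$ about the limiting basepoint. The idea is that inside each $X_i$ one has the tautological identity $B(x_i,r_n)=B_{B(x_i,r_{n+1})}(x_i,r_n)$, so that passing to the GH-limit one expects $Y_n$ to be isometric to the $r_n$-ball of $Y_{n+1}$. Given these nested isometric inclusions I would define the limit as the increasing union $X_\infty=\bigcup_n Y_n$, with the metric induced from the compatible metrics on the $Y_n$ and with basepoint $x_\infty$ the common image of the basepoints $x_i$. Because each $Y_n$ is compact and the inclusions preserve basepoint and metric, $X_\infty$ is a complete, proper (hence locally compact) pointed metric space, and by construction $B(x_i,r_n)\to B(x_\infty,r_n)$ in $d_{GH}$, which is precisely pointed GH-convergence along the diagonal subsequence.

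The hard part will be the compatibility step, because GH-convergence of the larger balls $B(x_i,r_{n+1})$ does not automatically transfer to GH-convergence of the sub-balls $B(x_i,r_n)$ with the anticipated limit: points lying at distance close to $r_n$ can be pushed just across the cutoff by the small metric distortions inherent in the $\epsilon_i$-approximating maps, so the $r_n$-ball of the limit may differ from the limit of the $r_n$-balls on a thin boundary layer. The standard remedy, which I would adopt, is to work with a dense set of ``good'' radii for which the number of points in the annulus $\{r_n-\delta\le d(x_i,\cdot)\le r_n+\delta\}$ is controlled uniformly in $i$, to establish the inclusions only up to a Hausdorff error that vanishes as $\delta\to 0$, and then to use monotonicity of balls in $r$ to squeeze convergence for an arbitrary radius $r$ between two good radii $r_n<r<r_{n+1}$, thereby recovering convergence for \emph{all} $r>0$ as required by the definition. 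Since the careful bookkeeping of these boundary estimates is the technical content already treated in \citep{Requ2,Gromov1,Petersen}, I would present the reduction and gluing in full and refer to those sources for the annulus estimates.
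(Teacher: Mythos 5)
Your outline is essentially correct, but note that the paper itself offers no proof of this statement: it declares that the compact Gromov compactness theorem ``can immediately be extended'' to pointed spaces and defers to the cited literature. What you have written is the standard argument that those references carry out, so in effect you are supplying the proof the paper omits. Your structure is the right one: apply the compact theorem to the families $\{B(x_i,r_n)\}_i$ for $r_n=n$, extract a Cantor diagonal subsequence so that all radii converge simultaneously, and then glue the limit balls $Y_n$ along basepoint-preserving isometric inclusions into a proper pointed limit space. You also correctly identify the one point where the argument is not automatic, namely that the GH-limit of the $r_n$-balls need not coincide with the $r_n$-ball of the limit of the $r_{n+1}$-balls because of distortion near the sphere of radius $r_n$; handling this by restricting to a dense set of good radii and squeezing by monotonicity is exactly the standard remedy (it is also why some authors weaken the definition of pointed convergence to $\epsilon$-approximations on balls). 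Two small caveats: your criterion for a ``good'' radius is phrased in terms of the \emph{number of points} in an annulus, which presupposes discreteness --- for general proper spaces one should control the covering numbers (or a doubling measure) of the annulus instead; and the claim that $x_\infty$ is ``the common image of the basepoints'' tacitly requires tracking the basepoints through the approximating maps at every stage, which should be stated as part of the diagonal extraction. Neither affects the correctness of the scheme.
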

Note that this means that the balls converge in the usual GH-sense.
But the convergence is not uniform.

It is perhaps helpful to illustrate these results by giving a simple
example. Take the lattice $\mathbb{Z}^{n}$ embedded in $\mathbb{R}^{n}$
and take the scaling limit 
\begin{equation}
\phi_{l}:\left(\mathbb{Z}^{n},d_{\mathbb{Z}^{n}}\right)\longmapsto\left(\mathbb{Z}^{n},\lambda d_{\mathbb{Z}^{n}}\right),\,\,\,\,\,\,\,\,\,\,\,\,\,\lambda=2^{-l}
\end{equation}
where $d_{\mathbb{Z}^{n}}$ is a suitable metric on $\mathbb{Z}^{n}$
(see below). For $\lambda\rightarrow0$, i.e. $l\to\infty$, we have
\begin{equation}
\lim_{\lambda\to0}\left(\mathbb{Z}^{n},\lambda d_{\mathbb{Z}^{n}}\right)=\left(\mathbb{R}^{n},d_{\mathbb{R}^{n}}\right),
\end{equation}
which holds only in pointed GH-sense since the convergence is not
uniform. Here the metric of the scaling limit, $d_{\mathbb{R}^{n}}$, depends on $d_{\mathbb{Z}^{n}}$. For example if we use the Euclidean metric on $\mathbb{Z}^{n}$,
the limit space $\mathbb{R}^{n}$ also carries the ordinary Euclidean
metric. But if we use the \textbf{graph metric} or \textbf{taxicab
metric} on $\mathbb{Z}^{n}$, the limit metric is also the taxicab
metric (or $l^{1}$-metric) on $\mathbb{R}^{n}$. In this example, for a fixed ball around $x=0$ we can infer from what we
said in theorem \ref{rough} that for $l\to\infty$ the ball is more
and more filled with points stemming from lattices having edge length
$2^{-l}$. It is in this way that we can envisage the pointed GH-convergence
in the scaling situation.

To use this theorem effectively, we need practical and easy to control
properties, that imply that a sequence of spaces is uniformly compact.
We will supply properties which hold in particular in the situation
we are interested in, that is, (infinite) networks/graphs. We begin
with versions of the \emph{doubling property}. A metric space is called
\textbf{doubling} if each ball, $B(x_{i},r)$, can be covered by at
most $C$ balls with radius half that of $B(x_{i},r)$, with $C$
independent of the balls $B$. It easily follows via iteration that
this implies that $B$ is coverable by $C^{k}$ balls of radius $2^{-k}$.

There is a nice relation of this property to a more manageable case
as follows: With $(X,d)$ a metric space, and $\mu$ a positive Borel
measure on $X$, then $\mu$ is said to be \textbf{doubling}, if there
exists a positive constant, $C$, being independent of $B$ such that
\begin{equation}
\mu(2B)\leq C\cdot\mu(B)
\end{equation}
for all balls in $X$. Here $2B$ is a ball with the same center as
$B$ but twice the radius. Then it follows that 
\begin{equation}
\mu(2^{k}B)\leq C^{k}\cdot\mu(B).
\end{equation}
We then have the theorem
\begin{theo}
If $(X,d)$ has a doubling measure, it is doubling as a metric space.
\end{theo}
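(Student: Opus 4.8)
The plan is to prove the implication by a standard packing argument, using the doubling measure to bound the number of half-sized balls needed to cover a given ball. Fix a ball $B=B(x,r)$. First I would choose a maximal $r/2$-separated subset $\{y_1,\dots,y_k\}\subseteq B(x,r)$, i.e. a maximal collection of points with $d(y_i,y_j)\geq r/2$ for $i\neq j$. By maximality no further point of $B(x,r)$ can be added, which forces every point of $B(x,r)$ to lie within $r/2$ of some $y_i$; hence the balls $B(y_i,r/2)$ already cover $B(x,r)$. It therefore suffices to bound $k$ by a constant depending only on the measure's doubling constant $C$.

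To bound $k$, I would exploit that the shrunken balls $B(y_i,r/4)$ are pairwise disjoint: if $z$ lay in $B(y_i,r/4)\cap B(y_j,r/4)$ then $d(y_i,y_j)\leq d(y_i,z)+d(z,y_j)<r/2$, contradicting separation. Since each $y_i\in B(x,r)$, every such ball satisfies $B(y_i,r/4)\subseteq B(x,2r)$, so that
\begin{equation}
\sum_{i=1}^{k}\mu\big(B(y_i,r/4)\big)=\mu\Big(\bigcup_{i=1}^{k}B(y_i,r/4)\Big)\leq\mu\big(B(x,2r)\big).
\end{equation}
The key step is then to bound each summand from below by a fixed fraction of $\mu(B(x,2r))$. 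Because $y_i\in B(x,r)$ we have the reverse inclusion $B(x,2r)\subseteq B(y_i,3r)\subseteq B(y_i,2^4\cdot(r/4))$, and four applications of the doubling inequality $\mu(2B)\leq C\,\mu(B)$ give $\mu(B(x,2r))\leq C^4\,\mu(B(y_i,r/4))$. Substituting this into the displayed inequality yields $k\,C^{-4}\mu(B(x,2r))\leq\mu(B(x,2r))$, whence $k\leq C^4$. Thus $B(x,r)$ is covered by at most $C^4$ balls of half its radius, uniformly in the ball, which is exactly the definition of a doubling metric space (with covering constant $C^4$).

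The main obstacle I anticipate is not the packing geometry but the nondegeneracy needed to divide by $\mu(B(x,2r))$: the argument requires $0<\mu(B(x,2r))<\infty$ for every ball. I would handle this by adopting the standing convention (implicit in the notion of a doubling measure) that $\mu$ assigns finite and strictly positive mass to every ball; the doubling inequality then propagates positivity from one ball to all larger concentric balls, so no ball appearing above is null and the division is legitimate. A secondary point to verify is the existence of the maximal separated set itself: rather than invoking Zorn's lemma one can observe that the very measure estimate above shows that any $r/2$-separated subset of $B(x,r)$ has at most $C^4$ elements, so a finite maximal one exists. With these two technical points dispatched, the covering bound $C^4$ depends only on $C$ and is independent of the choice of ball, as required.
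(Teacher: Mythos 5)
Your argument is correct, and it is essentially the canonical proof: the paper itself does not prove this theorem at all but merely cites Gromov's book (p.~412, the appendix by Semmes), where exactly this maximal-separated-set packing argument is carried out. All the substantive steps check out: maximality forces the half-radius balls to cover $B(x,r)$; the quarter-radius balls sit inside $B(x,2r)$; the reverse inclusion $B(x,2r)\subseteq B(y_i,4r)$ together with four applications of $\mu(2B)\le C\mu(B)$ gives the lower bound $\mu(B(y_i,r/4))\ge C^{-4}\mu(B(x,2r))$; and summing over the disjoint family yields $k\le C^4$. You also correctly isolate the two points that genuinely need a word, namely $0<\mu(B)<\infty$ for every ball (the standard nontriviality convention for doubling measures, which the paper leaves implicit) and the existence of a finite maximal separated set. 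The only cosmetic flaw is in the disjointness step: with closed balls (as the paper uses) and separation $d(y_i,y_j)\ge r/2$, a common point of $B(y_i,r/4)$ and $B(y_j,r/4)$ only gives $d(y_i,y_j)\le r/2$, which is not quite a contradiction. Either demand strict separation $d(y_i,y_j)>r/2$ (maximality then still forces the closed $r/2$-balls to cover, since any uncovered point could be adjoined) or work with open quarter-radius balls; with that one-line adjustment the proof is complete and uniform in the ball, as required.
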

For a proof see \citep{Gromov}, p.412 (the chapter being written
by Semmes).

Now, defining a sequence of spaces $\left\{ \left(X_{n},d_{n}\right)\right\} $
to be \textbf{uniformly doubling} if the above doubling properties
hold uniformly in it, we have
\begin{theo}
A sequence of spaces is uniformly compact if it is uniformly doubling.
\end{theo}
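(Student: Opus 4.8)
The plan is to verify directly the two defining requirements of uniform compactness, namely a uniform bound on the diameters and, for every $\epsilon>0$, a covering by $N_\epsilon<\infty$ balls of radius $\epsilon$ with $N_\epsilon$ independent of the index. The second requirement is exactly what the iterated doubling estimate is designed to deliver, so the bulk of the argument is bookkeeping with that iteration; the first requirement, by contrast, is not supplied by doubling alone and must be imported from the standing compactness hypothesis.

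First I would record the iterated form of the doubling property already noted in the text: if each ball can be covered by $C$ balls of half its radius, then applying this $k$ times shows that any ball $B(x,R)$ is covered by $C^{k}$ balls of radius $R\,2^{-k}$, with the \emph{same} constant $C$ for every space in the sequence by the uniform doubling assumption. Next, using that the diameters are uniformly bounded, say $\mathrm{diam}(X_n)\le D$ for all $n$, I would observe that each space is itself a single ball: $X_n=B(x_n,D)$ for any chosen basepoint $x_n\in X_n$. Applying the iterated estimate to this ball, $X_n$ is covered by $C^{k}$ balls of radius $D\,2^{-k}$.

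To finish, given $\epsilon>0$ I would choose $k=\lceil \log_2(D/\epsilon)\rceil$ so that $D\,2^{-k}\le\epsilon$, and set $N_\epsilon=C^{k}$ (balls of radius $\le\epsilon$ can be enlarged to radius exactly $\epsilon$). Since both $C$ and $D$ are uniform in $n$, this $N_\epsilon$ is independent of the index, which is precisely the covering condition in the definition of uniform compactness. Together with the diameter bound this establishes the claim.

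The main obstacle is the diameter bound rather than the covering count. Doubling is a purely local, scale-invariant condition and does not by itself constrain the overall size of a space---$\mathbb{R}^n$ with its doubling Lebesgue measure is the standard counterexample---so uniform doubling cannot imply uniformly bounded diameters. The bound must therefore be part of the hypotheses under which uniform compactness is being tested (as it already is for the compact spaces $X_\lambda$ appearing in Gromov's compactness theorem above). I would make this dependence explicit at the outset, after which the doubling iteration supplies the uniform $\epsilon$-nets and the proof reduces to the elementary computation sketched here.
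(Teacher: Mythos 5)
Your argument is correct and coincides with the paper's: the theorem is stated there without a separate proof, the intended justification being precisely the iterated covering estimate ($C^{k}$ balls of radius $2^{-k}$ times the original) recorded just before it, which is the computation you carry out. Your caveat about the diameter bound is also well taken --- doubling is scale-invariant and cannot bound diameters, so the uniform diameter bound must indeed be treated as a standing hypothesis rather than a consequence; in the paper's actual application the spaces in question are balls $B(x_{i},r)$ of a fixed radius $r$ (as required by the pointed Gromov compactness theorem), so that bound is automatic there.
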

It turns out that the latter property is more manageable than the
former one: Consider a graph $\left(G,d\right)$ of \textbf{uniform
polynomial growth}, for which
\begin{equation}
Ar^{d}\leq\beta(x,r)\leq Br^{d}
\end{equation}
for all $r\geq r_{0}$ and $A,B$ independent of the reference point
$x$. Taking the sequence of scaled graphs $\left\{ \left(G_{n},d_{n}=n^{-1}\cdot d\right)\right\} $
made from $G$, we can prove (\citep{Requ2})
\begin{conclusion}
A graph with uniform polynomial growth has a doubling \textbf{counting
measure} for sufficiently large $r\geq r_{0}$ and is hence doubling
as a metric space for sufficiently large $r\geq r_{0}$. This implies
that all balls $B_{n}(x,r)$ in $\left\{ G_{n},d_{n}\right\} $, where
$G_{n}$ are of uniform polynomical growth, are uniformly compact.
We conclude that there exists a subsequence of $\left\{ \left(G_{n},d_{n}=n^{-1}\cdot d\right)\right\} $,
that converges in pointed GH-sense.\label{thm:upg-converge}
\end{conclusion}
This conclusion mean that the continuum limit, $\left(G_{\infty},d_{\infty}\right)$,
discussed in the preceding subsection, exists for graphs of uniform
polynomial growth with rescaling map
\begin{equation}
\phi_{\lambda}:(G,d)\longmapsto(G,\lambda d)=(G,n^{-1}d)=(G_{n},d_{n})
\end{equation}
where
\begin{equation}
\left(G_{\infty},d_{\infty}\right)=\lim_{\lambda\to0}\phi_{\lambda}\left((G,d)\right)=\lim_{\lambda\to0}(G,\lambda d).
\end{equation}

\section{The Geometric Renormalization Group in the Superspace of Metric Spaces\label{sec:Geo-Renorm}}

We have now the necessary methods at our disposal in order to develop
a geometric version of a Wilsonian renormalization group (RG) in our
superspace of metric spaces. These consist of a general concept of
\textbf{coarse graining}, a notion of \textbf{continuum limit}, and
the idea of \textbf{typicality} of micro states lying in a \textbf{phase
cell} describing some \textbf{macro state} like our continuum space
time.

To begin with, we have argued in section \ref{sec:typical} that,
instead of dealing with possibly a complicated ensemble structure
in our superspace $\mathcal{S}$, we can perform the coarse graining
process on certain selected micro states, for example networks or
graphs. This coarse graining process was then described in section
\ref{sec:coarse}, at least as far as its general characteristics
are concerned. This process may may differ slightly from one metric
space to another, depending on the type of the space, but the central
pieces are an averaging and/or purification of certain substructures. 

In the cases which interest us most, i.e. networks/graphs, this averaging
consists typically of the replacement of particular subgraphs (like
cliques), by vertices on the next coarse graining level, as described
in section \ref{sec:coarse}. The purification consists of adding/deleting
edges or even whole subgraphs according to certain principles. In
\citep{RG} the substructures were cliques and we deleted cliques
which were unusually small, or edges in the \textbf{clique graph}
if the overlap of the respective cliques was too marginal. The whole
process tries to simulate the \textbf{block spin} approach of the
ordinary real space Wilsonian renormalization group with the cliques
in our example representing the blocks.
\begin{ob}
Note that the repetition of this coarse graining process does not
leave the subregime of discrete graphs/networks. Thus we have to supplement
it by a second type of process, a rescaling, as described in the two
preceding sections. This then yields a continuum limit not only for
the final limit space of a coarse graining process, but also for the
various stages (or spaces) before that.
\end{ob}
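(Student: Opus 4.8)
The observation makes two logically separate assertions, and the plan is to treat them in turn. The first---that iterating $\mathcal{K}$ never leaves the discrete subregime---is essentially structural. Each concrete realization of $\mathcal{K}$ introduced in Section~\ref{sec:coarse} ($k$-local edge insertion/deletion, vertex contraction of diameter $\leq k$, and the clique transformation $\mathcal{C}(G)$) sends a graph to a graph, and the intrinsic graph metric on a connected graph is integer-valued and locally finite. First I would verify that each operation also preserves local boundedness of the vertex degree: for the $k$-local operations this holds because the number of admissible insertions in a $k$-neighbourhood is bounded in a locally finite graph, and for the clique graph it is exactly the content of Theorem~\ref{Theo:Spencer}. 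An induction on the length of the chain then shows that every $(G_i,d_i)=\mathcal{K}^i(G_0)$ is again a locally finite graph carrying its integer-valued metric, hence a genuinely discrete element of $\mathcal{S}$; in particular no $(G_i,d_i)$ can itself be a smooth continuum, which is precisely why a second, rescaling, operation is required.

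The second assertion is that the rescaling $\phi_\lambda$ supplies a continuum limit at \emph{every} stage, not merely at the terminal one. For a fixed index $i$ I would invoke Conclusion~\ref{thm:upg-converge} directly: if $(G_i,d_i)$ has uniform polynomial growth, then the scaled sequence $\{(G_i,n^{-1}d_i)\}$ carries a doubling counting measure for large radii, is therefore uniformly doubling and uniformly compact on balls, and hence possesses a pointed GH-limit $(G_{i,\infty},d_{i,\infty})=\lim_{\lambda\to0}\phi_\lambda((G_i,d_i))$. Running this argument separately for each $i$ yields a continuum limit for every member of the coarse-graining chain, which is exactly the claim about the ``various stages (or spaces) before'' the final one.

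The point that requires real work, and which I expect to be the main obstacle, is that Conclusion~\ref{thm:upg-converge} takes uniform polynomial growth as a hypothesis, so I must show this property is \emph{inherited} along the chain rather than re-assumed at each step. The growth degree $\bar{D}$ itself is safe: it is a quasi-isometry invariant and, by the Lemma of Section~\ref{sec:Graphs-mspace}, a genuine base-point-independent graph characteristic, so $\bar{D}(G_{i+1})=\bar{D}(G_i)$ because each $\mathcal{K}$ is a quasi-isometry, and the exponent $d$ stays constant. The delicate part is upgrading this to the two-sided bounds $A r^{d}\leq\beta(x,r)\leq B r^{d}$ with $A,B$ \emph{uniform in the base point}: a quasi-isometry with constants $(\lambda,\epsilon)$ distorts ball counts by controlled factors, and the estimate to push through is that these factors can be chosen uniformly over all vertices. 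This holds precisely because the operations of Section~\ref{sec:coarse} are $k$-local with a single global $k$, i.e. have global quasi-isometry constants; I would make it quantitative by comparing $\beta(\mathcal{K}(x),r)$ with $\beta(x,\lambda r+\epsilon)$ through the defining inequalities (\ref{eq:def-quas-iso-embd})--(\ref{eq:def-quas-iso}) and absorbing the additive $\epsilon$ into $A,B$ for $r\geq r_0$. Finally, since the Gromov compactness criterion of Section~\ref{sec:Converg-2} only delivers a convergent subsequence, I would pin down the limit as an honest object by appealing to the scale invariance established in Section~\ref{sec:Converg-1}, namely $d_{GH}(X_\infty,lX_\infty)=0$, which identifies $(G_{i,\infty},d_{i,\infty})$ uniquely within the basin $\{X'\mid d_{GH}(X',G_i)<\infty\}$ and thereby completes the argument.
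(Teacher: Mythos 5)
Your proposal is correct and follows essentially the same route as the paper: the paper treats this Observation as a structural remark whose substantive content is supplied by the theorem immediately following it (quasi-isometries between graphs of globally bounded vertex degree preserve uniform polynomial growth, proved in \citep{Requ2}) combined with Conclusion~\ref{thm:upg-converge}, which is exactly the chain of reasoning you reconstruct. Your extra care about propagating the uniform two-sided growth bounds and about the subsequence-versus-limit issue goes slightly beyond what the paper spells out, but it does not change the argument's structure.
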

Furthermore, in contrast to the ordinary Wilsonian RG, which typically
lives on simple Bravais lattices, the latter process of rescaling
is also quite complicated to perform on highly irregular spaces.

Starting from some initial graph/network $G_{0}$, which we presume
represents the pregeometry of our space-time on the most fundamental
level, and neglecting the possible micro states carried by the vertices
and edges\footnote{As mentioned before, in this first work, we take the simplest cases
where the color of vertices and edges play no role in the coarse graining
scenario.}, we apply a sequence of coarse graining operations $\mathcal{K}$
on $G_{0}$, which then yields a coarse grained sequence of spaces
$\left\{ \left(G_{i},d_{i}\right)\right\} ,\,i=0,\ldots,N$, depicted
as 
\begin{equation}
\left(G_{0},d_{0}\right)\stackrel{\mathcal{K}}{\longrightarrow}\left(G_{1},d_{1}\right)\stackrel{\mathcal{K}}{\longrightarrow}\left(G_{2},d_{2}\right)\stackrel{\mathcal{K}}{\longrightarrow}\cdots\stackrel{\mathcal{K}}{\longrightarrow}\left(G_{n},d_{n}\right).
\end{equation}
According to our assumptions, these coarse graining operators will
belong to the class of \textbf{quasi-isometries}, including rough
isometries. 
\begin{bem}
As we will mention at the end of section \ref{Dim}, we possibly may
have to deal with models of S-T which have both a near- and a far-order
structure on a primordial scale, being generated by a sparse network
of \textbf{translocal} edges. In that case it may happen that we have
to leave the class of quasi-isometric coarse graining which was based
on $k$-local operations.
\end{bem}
We expect that, at least in the first steps, $\mathcal{K}$ will consist
of a large number of subgraph contractions and edge deletions/insertions
according to our fixed coarse graining protocol. Hence, in general,
$\mathcal{K}$ will be a true quasi-isometry and not a rough isometry.
As we saw above, in that case the GH-distance between consecutive
graphs is infinite and thus they will differ structurally. On the
other hand we observed in \citep{RG} that after a number of coarse
graining steps the corresponding graphs had the tendency of becoming
more regular and structurally more similar. This motivates us to formulate
our central RG-conjecture: 
\begin{conjecture}
If we start from suitable initial graphs which display a certain kind
of (hidden) \textbf{selfsimilarity}, we expect that our coarse grained
graphs will change their character after a number of coarse graining
steps and become roughly isometric, i.e. structurally similar.
\end{conjecture}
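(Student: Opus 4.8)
The plan is to turn the qualitative empirical observation --- that iterated coarse graining regularizes an initially erratic graph --- into a statement about the quasi-isometry parameters $(\lambda,\epsilon)$ of the successive maps $\mathcal{K}\colon\mathcal{K}^{m}(G_{0})\to\mathcal{K}^{m+1}(G_{0})$, and to show that the multiplicative distortion $\lambda$ flows to $1$ while the additive distortion stays bounded, so that $\mathcal{K}$ degenerates from a true quasi-isometry into a rough isometry. By Theorem \ref{rough} this is exactly the condition $d_{GH}(\mathcal{K}^{m}(G_{0}),\mathcal{K}^{m+1}(G_{0}))<\infty$, i.e. the claimed structural similarity. The first task is therefore to fix a quantitative notion of ``hidden selfsimilarity.'' I would phrase it through the growth function: assume $G_{0}$ has locally bounded vertex degree and uniform polynomial growth, so that $Ar^{d}\leq\beta(x,r)\leq Br^{d}$ holds with a dimension $d=\bar{D}$ that, by the lemma of section \ref{sec:Graphs-mspace}, is independent of the base point, and interpret selfsimilarity as the requirement that the coarse graining reproduces this profile up to an overall scale, $\beta_{\mathcal{K}(G)}(r)\approx s^{-d}\beta_{G}(sr)$ for the contraction scale $s$ built into $\mathcal{K}$.

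The second, and central, step is to track a single ``irregularity functional'' along the chain, for which the natural candidate is the ratio $Q_{m}:=B_{m}/A_{m}$ of the upper and lower growth constants of $\mathcal{K}^{m}(G_{0})$. Because $\bar{D}$ is a quasi-isometry invariant (dimension stability under $k$-local coarse graining, as recalled in section \ref{Dim}), every graph in the chain carries the \emph{same} exponent $d$; the only freedom left is the prefactor window $[A_{m},B_{m}]$. I would try to show that each application of $\mathcal{K}$ --- being an averaging of $k$-neighborhoods into single vertices --- contracts this window, $Q_{m+1}\leq\theta\,Q_{m}+c$ with $\theta<1$, so that $Q_{m}\to Q_{\infty}$ and the growth becomes asymptotically exact, $\beta(x,r)\sim a\,r^{d}$. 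Exact polynomial growth is precisely the fixed-point (selfsimilar) regime, and it forces the contracted $k$-neighborhoods to be combinatorially saturated, i.e. to behave like cliques.

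At that point the third step is immediate in principle: on such a locally saturated graph the vertex contraction of diameter $\leq k$ coincides, up to additive corrections, with the clique graph transformation $\mathcal{C}(G)$, which by Theorem \ref{Theo:Spencer} is a rough isometry with $C=\epsilon=1$. Hence $\lambda_{m}\to1$, the GH-distance between consecutive graphs becomes finite by Theorem \ref{rough}, and all graphs beyond the transition index lie in one $d_{GH}$-finite class; by Conclusion \ref{thm:upg-converge} each of them still admits a pointed GH scaling limit, and by the scale-invariance observation of section \ref{sec:Converg-1} these limits coincide, which is the asserted common continuum limit of the roughly isometric tail.

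The hard part is unquestionably the contraction estimate $Q_{m+1}\leq\theta Q_{m}+c$ of the second step. Unlike the ordinary Wilsonian RG there is no Hamiltonian, no free energy, and hence no monotone functional handed to us; one must extract the regularizing effect purely from the combinatorics of $k$-local contraction on an irregular graph, and control simultaneously that the additive constant $\epsilon_{m}$ does not blow up while $\lambda_{m}\to1$. It is quite possible that the implication holds only for a restricted class of ``suitable'' initial graphs --- which is exactly why the statement is posed as a conjecture rather than a theorem --- and that the selfsimilarity hypothesis must be strengthened (for instance to an exact scaling symmetry of the kind exhibited by the scale-free constructions of \citep{scalefree}) before the flow provably reaches the rough-isometry regime in finitely many steps.
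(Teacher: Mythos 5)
This statement is posed in the paper as a \emph{conjecture} (the ``central RG-conjecture'') and the paper offers no proof of it: the only support given is the heuristic motivation in section \ref{sec:Geo-Renorm}, namely the numerical observation in \citep{RG} that after several coarse graining steps the graphs ``had the tendency of becoming more regular and structurally more similar.'' So there is no proof in the paper to compare yours against; what you have written is a research program for upgrading the conjecture to a theorem. Judged on its own terms, the program has a genuine gap beyond the one you concede. Your step two, the contraction estimate $Q_{m+1}\leq\theta Q_{m}+c$, is not a technical lemma left for later --- it \emph{is} the conjecture, restated quantitatively; asserting that $k$-local contraction ``contracts the window $[A_{m},B_{m}]$'' assumes exactly the regularizing effect that needs to be demonstrated, and you give no mechanism for it.

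Step three contains a concrete error. Exact (or uniform) polynomial growth $\beta(x,r)\sim a\,r^{d}$ is a ball-counting statement and does not force $k$-neighborhoods to be ``combinatorially saturated'' or clique-like: $\mathbb{Z}^{d}$ has uniform polynomial growth with asymptotically sharp constants, yet its neighborhoods are nowhere near cliques, and the vertex contraction of diameter $\leq k$ applied to $\mathbb{Z}^{d}$ rescales the graph metric by a factor of order $k$, i.e.\ it is a true $(\lambda\approx k,\epsilon)$-quasi-isometry as a map, at every step of the chain, never degenerating into the clique-graph transformation of Theorem \ref{Theo:Spencer}. This example also exposes a conflation in your scheme between the distortion constants of the particular map $\mathcal{K}$ and the rough isometry of the \emph{spaces}: $\mathbb{Z}^{d}$ and its $k$-block contraction are isometric as abstract graphs (hence at GH-distance zero) even though $\lambda_{m}$ for the map $\mathcal{K}$ never flows to $1$. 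By Theorem \ref{rough} the conjecture is a statement about the existence of \emph{some} rough isometry between consecutive spaces, so tracking the flow of $(\lambda_{m},\epsilon_{m})$ for the given coarse graining maps is neither necessary nor, without further argument, sufficient. A correct formalization would have to bound $d_{GH}(\mathcal{K}^{m}(G_{0}),\mathcal{K}^{m+1}(G_{0}))$ directly, or exhibit the rough isometry by hand for the restricted class of ``suitable'' initial graphs, which remains open.
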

This means that after, say, step $m$ of the coarse graining, the
operation 
\begin{equation}
\mathcal{K}:\left(G_{m-1},d_{m-1}\right)\to\left(G_{m},d_{m}\right)
\end{equation}
becomes a rough isometry. Note that operationally $\mathcal{K}$ obeys
the same protocols (contract cliques to new vertices etc.), but due
to the change of structure of our graphs, this operation now yields
a space that is roughly isometric to the previous one. 

At this point we should spell out a warning. In the introductory sections
we introduced the idea of phase cells of microstates which make up
the observable continuum limit space-time manifold S-T. We learned
previously that roughly isomorphic spaces have finite GH-distance
and thus have the same continuum limit. On the other hand, we cannot
expect that in the above sequence of coarse graining steps the roughly
isometric spaces converge to a limit on the microscopic level. What
we will observe however is that they belong to a joint macrocopic
continuum limit space (i.e. our classical space-time). To be more
precise we have the following. 
\begin{ob}
In general a sequence of roughly isometric spaces, while being structurally
similar, are not uniformly compact, thus there will not exist in the
generic case a GH-convergent subsequence. This can for example be
seen in the transition from a graph to its clique graph (cf. the numerical
estimates in \citep{RG}). While the two spaces are roughly isometric
the number of cliques may strongly increase so that the doubling property
is not fulfilled. On the other hand we studied very simple and regular
examples in section 4 of \citep{RG} and found real fixed points or
accumulation points. That is, it may be that both cases may happen
while we think, the latter case is not the generic one.
\end{ob}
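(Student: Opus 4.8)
The plan is to prove the Observation by carefully separating the two distinct kinds of control that enter, respectively, the definition of rough isometry and the hypothesis of the compactness theorem, and then exhibiting the mechanism (clique graphs) by which they come apart. Rough isometry, by Theorem \ref{rough}, is exactly the statement that two spaces sit at finite Gromov-Hausdorff distance; concretely it only guarantees that distances are reproduced up to a fixed \emph{additive} constant $\epsilon$. Since finite GH-distance is transitive, all members $G_m, G_{m+1}, \ldots$ appearing after the geometric phase transition lie in a single finite-distance class. What rough isometry says nothing about, however, is the \emph{local density} of points: it does not bound the number of small balls needed to cover a ball of fixed radius. The compactness theorem of the preceding section, on the other hand, requires precisely such a bound --- uniform compactness means that for each $r$ and $\epsilon$ the ball $B(x_m,r)$ is coverable by $N_\epsilon < \infty$ balls of radius $\epsilon$ with $N_\epsilon$ independent of $m$. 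The first step is therefore to record that these two requirements are logically independent.

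Next I would make the covering-number mechanism explicit through the clique-graph transition $\mathcal{C}$. By Theorem \ref{Theo:Spencer} the passage $G_m \mapsto \mathcal{C}(G_m)$ is a rough isometry with $C = \epsilon = 1$, so at every stage distances are preserved up to the additive constant $1$ and the whole sequence remains mutually roughly isometric. Yet the vertices of $\mathcal{C}(G_m)$ are the cliques of $G_m$, and in a graph with even moderately rich local connectivity the number of maximal cliques meeting a fixed region can grow at each iteration. I would quantify this via the counting measure: if the number of cliques inside $B(x_m,r)$ diverges as $m \to \infty$, then the covering number $N_\epsilon(B(x_m,r))$ is unbounded in $m$, which is exactly the failure of uniform compactness. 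This is the content of the numerical estimates of \citep{RG}, and it is the step I would try to turn into a clean inequality for a concrete graph family.

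I would then connect this to the doubling criterion, which supplies the complementary intuition. Conclusion \ref{thm:upg-converge} shows that uniform polynomial growth $A r^{d} \le \beta(x,r) \le B r^{d}$, with $A,B$ independent of the reference point, forces a uniformly doubling counting measure and hence uniform compactness. The clique-graph sequence fails precisely here: although the growth \emph{exponent} $d$ is a rough-isometry invariant and survives, the upper constant $B$ --- the effective local density --- degenerates under iteration, so no single doubling constant works for all $G_m$. Thus the sufficient condition of Conclusion \ref{thm:upg-converge} is violated, and by the compactness theorem the sequence has no GH-convergent subsequence. For the complementary assertion that a genuine fixed or accumulation point can nonetheless occur, I would point to rigid examples such as the lattices of section 4 of \citep{RG}, where $\mathcal{C}$ returns a metrically controlled copy, the constants $A,B$ stay uniform, uniform doubling persists, and convergence is recovered --- so that both alternatives indeed appear.

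The hard part will be the word \emph{generic}. The implication ``not uniformly compact, hence no convergent subsequence'' is immediate from the compactness theorem, and the \emph{existence} of roughly isometric, non-uniformly-compact sequences is within reach once the clique-density divergence is pinned down for one explicit family. What cannot be made fully rigorous at this level is the claim that this divergence is \emph{typical} while the rigid fixed-point behaviour is exceptional: that is a statement about the measure or topology of the class of admissible initial graphs, and in the paper it rests on a conjecture supported by the numerics of \citep{RG} rather than on a theorem. I would therefore present the Observation as a rigorous implication together with a worked mechanism, and flag the genericity as the residual conjectural ingredient.
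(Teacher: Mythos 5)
Your proposal follows essentially the same route as the paper: the paper offers no formal proof of this Observation, justifying it precisely by the clique-graph mechanism (rough isometry with $C=\epsilon=1$ from Theorem \ref{Theo:Spencer}, combined with the potentially strong increase in the number of cliques that destroys the doubling/covering-number control required by the Gromov compactness theorem) and by conceding that rigid examples yield genuine fixed or accumulation points. Your explicit separation of additive metric control from covering-number control, and your flagging of the genericity claim as a conjectural, numerics-based ingredient rather than a theorem, are faithful elaborations of exactly what the paper asserts.
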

Note in particular that on the discrete graph-level the GH-distance
is discrete. This implies that the smallest possible distances are
zero or one. This means that limit points on the discrete level, if
they exist at all, are attained after a finite number of steps and
remain stable under further coarse graining. The same happens with
possible accumulation points. These cases are illustrated by the above
mentioned examples.

In order to complete our coarse graining picture we now turn our attention
to the continuum limit of these spaces. In a first step we present
a theorem which turns out to be very useful in our context and which
we proved in \citep{Requ2}. 
\begin{theo}
Let $G_{1},G_{2}$ both have globally bounded vertex degree, let $G_{1}$
have uniform polynomial growth, and let $G_{1},G_{2}$ be quasi-isometric.
Then also $G_{2}$ has uniform polynomial growth. Thus in a sequence
of graphs $\left\{ \left(G_{i},d_{i}\right)\right\} ,\,i=0,\ldots N$
each derived from the previous one by a quasi-isometric coarse graining
process, each member has its own scaling limit as shown in theorem
\ref{thm:upg-converge}.
\end{theo}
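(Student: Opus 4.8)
The plan is to show that the growth exponent $\alpha$ is transported from $G_1$ to $G_2$ by the quasi-isometry $f:G_1\to G_2$ (with constants $\lambda\geq 1$, $\epsilon\geq 0$ and coarse-surjectivity constant $C\geq 0$), by establishing matching bounds $A'r^\alpha\leq \beta(G_2,w,r)\leq B'r^\alpha$ for all $w\in G_2$ and all large $r$, with $A',B'$ independent of $w$. The two globally bounded vertex degrees, say $\deg(G_1)\leq \Delta_1$ and $\deg(G_2)\leq\Delta_2$, are the tool that converts the purely metric control coming from quasi-isometry into control on the counting functions $\beta$: in any graph of globally bounded degree $\Delta$ the cardinality of a ball of fixed radius $s$ is bounded by a constant $M(\Delta,s)$ depending only on $\Delta$ and $s$, and this uniformity in the center is exactly what makes the resulting growth bounds uniform in $w$.

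For the upper bound I would fix $w\in G_2$ and, using coarse surjectivity, choose $x_0\in G_1$ with $d_{G_2}(w,f(x_0))\leq C$. Every $y\in B(w,r)\subset G_2$ lies within $C$ of some $f(x)$; the triangle inequality together with the lower quasi-isometry estimate then forces $x\in B(x_0,R)\subset G_1$ with $R=\lambda(r+2C+\epsilon)$. Assigning to each such $y$ one such $x$, the fibres of this assignment lie in balls of radius $C$ in $G_2$ and hence contain at most $M(\Delta_2,C)$ points, so $\beta(G_2,w,r)\leq M(\Delta_2,C)\,\beta(G_1,x_0,R)$. Applying the upper uniform-growth bound for $G_1$ and absorbing constants yields $\beta(G_2,w,r)\leq B'r^\alpha$ for large $r$.

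For the lower bound I would again pick $x_0\in G_1$ near $w$ and note that the upper quasi-isometry estimate sends $B(x_0,\rho)\subset G_1$ into $B(w,r)\subset G_2$ once $\rho=(r-\epsilon-C)/\lambda$. The only subtlety is that $f$ need not be injective, but the lower quasi-isometry estimate shows that any two points with the same image are at $G_1$-distance $\leq \lambda\epsilon$, so each fibre of $f$ has at most $M(\Delta_1,\lambda\epsilon)$ points. Hence $\beta(G_2,w,r)\geq |f(B(x_0,\rho))|\geq M(\Delta_1,\lambda\epsilon)^{-1}\,\beta(G_1,x_0,\rho)$, and the lower uniform-growth bound for $G_1$ gives $\beta(G_2,w,r)\geq A'r^\alpha$ for large $r$. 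Since every constant above depends only on $\lambda,\epsilon,C,\Delta_1,\Delta_2$ and on the growth constants $A,B,\alpha$ of $G_1$ (never on $w$), the two estimates establish uniform polynomial growth of $G_2$ with the same exponent $\alpha$.

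The main obstacle is precisely this uniformity in the reference point rather than the pointwise estimate: one must be sure that the choice of $x_0$, the passage through the fibres of $f$, and the radius-$C$ and radius-$\lambda\epsilon$ ball counts can all be bounded independently of $w$, which is where the global (as opposed to merely local) degree bounds on both graphs are indispensable. Granting this, the chain statement follows by induction along $G_0\to\cdots\to G_N$: $G_0$ has uniform polynomial growth by hypothesis, and each coarse-graining step is a quasi-isometry between graphs of globally bounded degree, so the exponent $\alpha$ is preserved at every stage; each $G_i$ therefore satisfies the hypotheses of the convergence result \ref{thm:upg-converge} and hence possesses its own scaling (continuum) limit.
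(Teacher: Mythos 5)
Your proof is correct and complete: it is the standard quasi-isometry-invariance-of-growth argument, and you correctly identify where each of the two global degree bounds is needed (the bound on $G_2$ to control the fibres of the assignment $y\mapsto x$ in the upper estimate, the bound on $G_1$ to control the fibres of $f$ in the lower estimate), which is exactly the point that makes the estimates uniform in the reference vertex. The paper itself does not reproduce a proof of this theorem -- it defers to the reference [Requ2] -- so there is nothing in-paper to compare against, but your ball-counting/fibre-counting route is the natural one and the induction along the coarse-graining chain is handled correctly.
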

This can be illustrated by means of the following graphical representation.
\[
\xymatrix{[G_{0}]_{\infty} & [\mathcal{K}(G_{0})]_{\infty} & [\mathcal{K}^{2}(G_{0})]_{\infty} & [\ldots]_{\infty} & [\mathcal{K}^{m}(G_{0})]_{\infty} & [\cdots]_{\infty} & [\mathcal{K}^{n}(G_{0})]_{\infty}\\
G_{0}\ar[r]^{\mathcal{K}}\ar[u]|-{\phi_{\lambda\rightarrow0}} & \mathcal{K}(G_{0})\ar[r]^{\mathcal{K}}\ar[u]|-{\phi_{\lambda\rightarrow0}} & \mathcal{K}^{2}(G_{0})\ar[r]^{\mathcal{K}}\ar[u]|-{\phi_{\lambda\rightarrow0}} & \cdots\ar[r]^{\mathcal{K}}\ar[u]|-{\phi_{\lambda\rightarrow0}} & \mathcal{K}^{m}(G_{0})\ar[r]^{\mathcal{K}}\ar[u]|-{\phi_{\lambda\rightarrow0}} & \cdots\ar[r]^{\mathcal{K}}\ar[u]|-{\phi_{\lambda\rightarrow0}} & \mathcal{K}^{n}(G_{0})\ar[u]|-{\phi_{\lambda\rightarrow0}}
}
\]
where $\phi_{\lambda\rightarrow0}$ stands for $\lim_{\lambda\rightarrow0}\phi_{\lambda}$,
and $\mathcal{K}^{i}(G_{0})=\underbrace{\mathcal{K}(\mathcal{K}(\ldots\mathcal{K}}_{i\,\textrm{times}}(G_{0})\ldots)=G_{i}$.
The lower horizontal chain contains the discrete spaces that could
be strictly quantum or semiclassical and each is derived from the
previous one by a coarse graining operation $\mathcal{K}$ that can
be a pure quasi-isometry or a rough isometry. The upper horizontal
chain contains the continuum limits of the discrete spaces in the
lower horizontal chain that are connected to their corresponding discrete
spaces by the recaling maps $\phi_{\lambda}$ for $\lambda\rightarrow0$.
We call the lower chain the coarse graining chain, and the upper one
the continuum limit chain.

Based on our previous discussions, two cases are now possible. Either
the graphs $G_{j},G_{j+1}=\mathcal{K}(G_{j})$, in the coarse graining
chain are connected by a pure quasi-isometry $\mathcal{K}$ in which
case they are structurally different, or by a rough-isometry $\mathcal{K}$
which means they are structurally similar since their GH-distance
is finite. We will now show what this means for their continuum limits.
We start by providing a simple lemma,
\begin{lemma}
For $X,X'$ being two metric spaces, we have 
\begin{equation}
d_{GH}(\lambda X,\lambda X')=\lambda\cdot d_{GH}(X,X').
\end{equation}
\end{lemma}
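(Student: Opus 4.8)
The plan is to exploit the definition of the Gromov-Hausdorff distance through isometric embeddings into a common ambient space, together with the elementary fact that scaling the ambient metric by $\lambda$ scales every Hausdorff distance by the same factor. Throughout, $\lambda X$ denotes the metric space $(X,\lambda d_X)$, and for a metric space $(Z,d_Z)$ I write $\lambda Z$ for $(Z,\lambda d_Z)$.

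First I would record the behaviour of the building blocks under scaling. A map $f\colon X\to Z$ is an isometric embedding of $(X,d_X)$ into $(Z,d_Z)$ if and only if the same map is an isometric embedding of $(X,\lambda d_X)$ into $(Z,\lambda d_Z)$, since the defining identity $d_Z(f(x_1),f(x_2))=d_X(x_1,x_2)$ is invariant under multiplying both sides by $\lambda$. Likewise, for subsets $A,B\subset Z$ one checks directly from the definition $d_H^Z(A,B)=\max\{\sup_{a}\inf_{b}d_Z(a,b),\sup_{b}\inf_{a}d_Z(a,b)\}$ that $d_H^{\lambda Z}(A,B)=\lambda\,d_H^{Z}(A,B)$, because every supremum and infimum is taken over distances that have all been multiplied by the single positive factor $\lambda$.

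With these two facts in hand, I would establish the inequality $d_{GH}(\lambda X,\lambda X')\le\lambda\,d_{GH}(X,X')$. Given any ambient space $Z$ and isometric embeddings $f\colon X\to Z$, $g\colon X'\to Z$, the same maps are isometric embeddings of $\lambda X,\lambda X'$ into $\lambda Z$, and
\begin{equation}
d_{H}^{\lambda Z}(f(X),g(X'))=\lambda\,d_{H}^{Z}(f(X),g(X')).
\end{equation}
Since $\lambda Z$ together with $f,g$ is one of the admissible configurations entering the infimum defining $d_{GH}(\lambda X,\lambda X')$, the left-hand side bounds that infimum from above; taking the infimum over all $Z,f,g$ on the right then yields the claimed inequality.

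For the reverse inequality I would simply apply the inequality just proved to the pair $\lambda X,\lambda X'$ with the scale factor $\lambda^{-1}$, using $\lambda^{-1}(\lambda X)=X$ and $\lambda^{-1}(\lambda X')=X'$, which gives $d_{GH}(X,X')\le\lambda^{-1}d_{GH}(\lambda X,\lambda X')$, i.e. $\lambda\,d_{GH}(X,X')\le d_{GH}(\lambda X,\lambda X')$. Combining the two inequalities gives the equality. The only point requiring care --- the main, though mild, obstacle --- is to verify that the correspondence $Z\leftrightarrow\lambda Z$ really is a bijection between the families of admissible ambient spaces for the two pairs, so that nothing is lost in passing the infimum through the factor $\lambda$; this is immediate because $\lambda>0$ makes $Z\mapsto\lambda Z$ invertible, but using the symmetric disjoint-union formulation (\ref{eq:GH-dist-2nd-def}), in which admissible metrics $d$ on $X\sqcup X'$ correspond bijectively to admissible metrics $\lambda d$ on $\lambda X\sqcup\lambda X'$, makes this transparent and avoids any appeal to the full class of ambient spaces.
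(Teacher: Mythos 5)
Your proof is correct and follows essentially the same route as the paper, whose one-line argument ("on $X,X',X\sqcup X'$ all metrics can be jointly scaled by a factor $\lambda$") is exactly the observation you spell out in detail via the bijection $Z\leftrightarrow\lambda Z$ (or equivalently via admissible metrics on $X\sqcup X'$) and the homogeneity of the Hausdorff distance. Your version is simply a fully written-out expansion of the paper's proof, including the two-sided inequality argument that the paper leaves implicit.
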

\begin{proof}
On $X,X',X\sqcup X'$ all metrics can be jointly scaled by a factor
$\lambda$. 
\end{proof}
Using this we can see that if in the above coarse graining chain of
graphs two members $G_{i},G_{j}$ have $d_{GH}(G_{i},G_{j})=\infty$,
then 
\begin{equation}
d_{GH}(G_{i,\infty},G_{j,\infty})=\infty
\end{equation}
i.e., their respective limit spaces also lie in different classes. 

Furthermore, we can see from the above lemma that if we enter the
regime where all the spaces $G_{l}$ are roughly isometric, that is,
if for two consecutive members $G_{i},G_{j}$ we have $d_{GH}(G_{i},G_{j})<\infty$,
then
\begin{equation}
d_{GH}(G_{i,\infty},G_{j,\infty})=0,
\end{equation}
and thus they have the same continuum limit. We argued above that
this joint continuum limit space represents the phase cell of microstates
which look classically or macroscopically the same.

We will conclude this section with a fundamental observation which
sheds some light on our continuum space-time on the various scales
of resolution. Furthermore it shows that our above observation concerning
the toy model of $\mathbb{Z}^{n}/\mathbb{R}^{n}$ was not accidental.
Let us take two elements $G_{i},G_{j}$ from the coarse graining chain
of graphs which are assumed to be only purely quasi-isometric and
each having a continuum limit $G_{i,\infty},G_{j,\infty}$ under the
limit $\lambda\to0$ or $l\to\infty$ of the scaling map 
\begin{equation}
\phi_{l}:\left(X,d_{X}\right)\longmapsto\left(X,\lambda d_{X}\right),\,\,\,\,\,\,\,\,\,\,\,\,\,\lambda=2^{-l}.
\end{equation}
Then,
\begin{theo}\label{Theo:DiffMetr-SameTop}
Under the above assumptions, the scaling limits $G_{i,\infty},G_{j,\infty}$
are homeomorphic topologically but carry different metrics. With the
help of the homeomorphism map, the limit spaces can then even chosen
to be the same topological space, but carrying different metrics.
That is, if our picture of S-T on the various scales is correct, the
various scales differ from each other with respect to the metric but
live on the same topological space. 
\end{theo}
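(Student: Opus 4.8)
The plan is to show that a \emph{pure} quasi-isometry survives the scaling limit as a \emph{bi-Lipschitz bijection}, for the simple reason that its additive constant is scaled away while its multiplicative constant is scale-invariant. A bi-Lipschitz bijection is a homeomorphism, which yields the topological identification; and since the multiplicative constant strictly exceeds $1$ (the quasi-isometry is not a rough isometry), the two limit metrics cannot be isometric, which is the ``different metrics'' assertion. Concretely, let $f:G_i\to G_j$ realize the quasi-isometry, with multiplicative constant $\lambda_0\geq 1$ and additive constant $\epsilon_0\geq 0$ as in (\ref{eq:def-quas-iso-embd}), quasi-surjectivity constant $C_0$ as in (\ref{eq:def-quas-iso}), and quasi-inverse $g$ with constant $\rho_0$ (not to be confused with the scale parameter $\lambda=2^{-l}$). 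Applying $\phi_l$ and multiplying the defining inequality by $2^{-l}$, the \emph{same} map $f$ satisfies, in the scaled metrics $2^{-l}d_i,\,2^{-l}d_j$,
\begin{equation}
\frac{1}{\lambda_0}\,\bigl(2^{-l}d_i(x,y)\bigr)-2^{-l}\epsilon_0\;\leq\;2^{-l}d_j\bigl(f(x),f(y)\bigr)\;\leq\;\lambda_0\,\bigl(2^{-l}d_i(x,y)\bigr)+2^{-l}\epsilon_0,
\end{equation}
so that the multiplicative constant remains $\lambda_0$ while the additive constant $2^{-l}\epsilon_0$, the surjectivity constant $2^{-l}C_0$, and the quasi-inverse constant $2^{-l}\rho_0$ all tend to $0$ as $l\to\infty$.

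The existence of the two scaling limits $(G_{i,\infty},d_{i,\infty})$ and $(G_{j,\infty},d_{j,\infty})$ is guaranteed by the theorem above that quasi-isometry preserves uniform polynomial growth together with Conclusion \ref{thm:upg-converge}. Next I would pass the map $f$ to the limit: using the pointed GH-convergence $(G_i,2^{-l}d_i)\to(G_{i,\infty},d_{i,\infty})$, and likewise for $j$, together with a diagonal/Arzel\`a--Ascoli argument of the kind underlying Gromov's compactness theorem, one extracts a limit map $f_\infty:G_{i,\infty}\to G_{j,\infty}$. Because the additive error $2^{-l}\epsilon_0$ vanishes, the limit map obeys the \emph{purely multiplicative} estimate
\begin{equation}
\frac{1}{\lambda_0}\,d_{i,\infty}(x,y)\;\leq\;d_{j,\infty}\bigl(f_\infty(x),f_\infty(y)\bigr)\;\leq\;\lambda_0\,d_{i,\infty}(x,y),
\end{equation}
i.e. $f_\infty$ is bi-Lipschitz. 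The vanishing of $2^{-l}C_0$ forces $f_\infty$ to have dense image, hence (by completeness of the limit) to be surjective, and the vanishing of $2^{-l}\rho_0$ makes $g_\infty\circ f_\infty=\mathrm{id}$ and $f_\infty\circ g_\infty=\mathrm{id}$, so $f_\infty$ is a bijection.

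A bi-Lipschitz bijection is a homeomorphism, which establishes that $G_{i,\infty}$ and $G_{j,\infty}$ are homeomorphic; transporting $d_{j,\infty}$ through $f_\infty$ realizes both limit metrics on one and the same underlying topological space, as claimed. That these two metrics are genuinely \emph{different} follows from the observation already recorded above: since $G_i,G_j$ are only purely quasi-isometric we have $d_{GH}(G_i,G_j)=\infty$, and the scaling lemma then gives $d_{GH}(G_{i,\infty},G_{j,\infty})=\infty\neq 0$, so the limits are not isometric. This is fully consistent with $\lambda_0>1$: on the (unbounded) limit spaces a bi-Lipschitz map with $\lambda_0>1$ cannot be a rough isometry, since $\lvert d_{j,\infty}(f_\infty x,f_\infty y)-d_{i,\infty}(x,y)\rvert$ grows unboundedly like $(\lambda_0-1)\,d_{i,\infty}(x,y)$, whence infinite GH-distance by Theorem \ref{rough}.

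The main obstacle I anticipate is the extraction of $f_\infty$ in the second step and the verification that it is well defined and bijective on the \emph{abstract} pointed GH-limit. Points of $G_{i,\infty}$ are represented by approximating sequences in the scaled graphs, and one must check that equivalent approximating sequences are sent to equivalent ones; this is precisely where the vanishing additive constant $2^{-l}\epsilon_0$ is essential, as it is what collapses the $\epsilon_0$-ambiguity (and the non-injectivity of $f$ at finite scale) in the limit. One must also keep track of the fact that the compactness theorem supplies only a convergent subsequence and an abstract limit space. Once $f_\infty$ is in hand, the bi-Lipschitz estimate and the homeomorphism conclusion are routine.
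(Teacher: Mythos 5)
Your proposal is correct and follows essentially the same route as the paper's proof: the additive quasi-isometry constant is scaled away under $\phi_{\lambda}$ while the multiplicative constant survives, yielding a bi-Lipschitz bijection $f_{\infty}$ between the limits (hence a homeomorphism), with the metrics distinguished via $d_{GH}(G_{i,\infty},G_{j,\infty})=\infty$. The only organizational difference is that the paper arranges bijectivity beforehand by restricting to $f(G_i)$ and $g(f(G_i))$ (shown to be roughly isometric to $G_j$ and $G_i$, hence sharing their continuum limits), whereas you let the scaled surjectivity and quasi-inverse constants vanish in the limit; both hinge on the same Arzel\`a--Ascoli-type extraction of $f_{\infty}$ that you correctly flag as the delicate step.
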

For a proof, see appendix \ref{AppSec:Proofs} 

We would like to add a remark what this means physically. One should
note that the different metrics are not artificially imposed from
outside but result from the structural differences hidden in the deeper
layers of our space-time. That means, they may result for example
from the existence of short cuts or microscopic wormholes on certain
scales of our coarse graining process, a possibility we mention at
the end of the following section.

\section{Dimension\label{Dim}}

We mentioned the concept of \textbf{dimension} as a characteristic
of such discrete and irregular spaces like our graphs/networks in
the introduction. It was briefly remarked there why we chose our particular
notion, being guided mainly by purely physical motivations \citep{Requ3}.
We later realized that our concept is closely related to the notion
of \textbf{growth degree} in \textbf{geometric group theory} (see
section 4 in \citep{Requ2}).

It turned out that this notion has a lot of stability properties and
it is interesting to study its behavior under the \textbf{geometric
RG}. In \citep{Requ3} we studied two slightly different versions.
Here, we define 
\begin{equation}
D(G)=\lim_{r\to\infty}\frac{\log\beta(G,v_{i},r)}{\log r}\label{eq:dim-def}
\end{equation}
Note that in general only $\limsup$ and $\liminf$ of the right hand
side exist, but for convenience, we assume here that instead, its
ordinary limit does exist.
\begin{bem}
Note that in the cases we study, the value on the left hand side of
(\ref{eq:dim-def}) is independent of the reference vertex $v_{i}$
which is one of the stability properties of $D(G)$ as we mentioned
above.
\end{bem}
It is of tantamount importance as a characteristic of our limit or
continuum spaces, whether $D(G)$ is an integer or some non-integer
real value, which indicates the existence of a \textbf{fractal} limit
space. We begin by compiling some results we proved in \citep{Requ2}.
First we have the important result: 
\begin{theo}
If $G_{1},G_{2}$ are quasi-isometric, with globally bounded vertex
degree, then they have the same dimension (theorem 2.22 in \citep{Requ2}). 
\end{theo}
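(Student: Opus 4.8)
The plan is to show that quasi-isometric graphs with globally bounded vertex degree share the same growth degree $D(G)$ by directly tracking how the growth function $\beta(G,v,r)$ transforms under a quasi-isometry $f\colon G_1 \to G_2$. The key idea is that the defining inequality of a quasi-isometric embedding, namely $\frac{1}{\lambda}d_1(x_1,x_2) - \epsilon \leq d_2(f(x_1),f(x_2)) \leq \lambda d_1(x_1,x_2) + \epsilon$, forces the image of a ball $B(v,r)$ in $G_1$ to sit inside a ball of comparable radius in $G_2$, and vice versa via the quasi-inverse $g$. Since $D(G)$ is defined through $\lim_{r\to\infty}\frac{\log\beta(G,v,r)}{\log r}$, I only need to control $\beta$ up to a multiplicative constant and an affine reparametrization of the radius; because both of these are washed out by the double logarithm in the limit, the two dimensions must coincide.

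First I would establish the ball-containment estimates. For $x_1,x_2 \in B(v,r) \subset G_1$, the upper bound in the quasi-isometry inequality gives $d_2(f(x_1),f(x_2)) \leq \lambda r + \epsilon$ (taking $x_2 = v$ as the basepoint after noting the distance from $v$ to any point of $B(v,r)$ is at most $r$), so $f(B(v,r)) \subseteq B(f(v), \lambda r + \epsilon)$ in $G_2$. Next I would count vertices: since $f$ is not necessarily injective, I must bound how many preimages a vertex can have, but the lower bound $d_2(f(x_1),f(x_2)) \geq \frac{1}{\lambda}d_1(x_1,x_2) - \epsilon$ shows that two points with the same image satisfy $d_1(x_1,x_2) \leq \lambda\epsilon$, so preimage clusters have bounded diameter; here the \emph{globally bounded vertex degree} hypothesis is essential, as it bounds the number of vertices in any fixed-radius ball of $G_1$ and hence the multiplicity of $f$. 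This yields $\beta(G_2, f(v), \lambda r + \epsilon) \gtrsim c\,\beta(G_1, v, r)$ for a constant $c$ depending only on $\lambda, \epsilon$ and the degree bound.

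The symmetric estimate, obtained by running the same argument with the quasi-inverse $g\colon G_2 \to G_1$ and using the quasi-inversion bounds $d_1(g\circ f(x),x)\leq\rho$, gives the reverse inequality $\beta(G_2, f(v), r') \lesssim C\,\beta(G_1, v, \lambda r' + \lambda\epsilon)$. Combining both, I obtain constants so that $\beta(G_2,\cdot,r)$ is sandwiched between constant multiples of $\beta(G_1,\cdot,\,$(affine in $r))$. Taking logarithms, dividing by $\log r$, and letting $r\to\infty$, the multiplicative constants contribute $\frac{\log c}{\log r}\to 0$ and the affine shift $\lambda r + \lambda\epsilon$ contributes $\frac{\log(\lambda r + \lambda\epsilon)}{\log r}\to 1$, so $D(G_1) = D(G_2)$.

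The main obstacle I anticipate is the non-injectivity of $f$ and the bookkeeping of preimage multiplicities: a quasi-isometry may collapse many vertices of $G_1$ onto a single vertex of $G_2$ (and conversely leave gaps, since $f$ need not be surjective), so the counting estimate relating $\beta(G_1,v,r)$ to $\beta(G_2,f(v),r')$ requires both the bounded-diameter-of-fibers fact and the bounded-degree hypothesis working together. Care must also be taken that all constants are uniform in the basepoint $v$, which is where I would invoke the earlier lemma guaranteeing that $\bar{D}$ (and hence $D$) is independent of the reference vertex for locally bounded vertex degree, ensuring the limit is a genuine graph invariant on both sides.
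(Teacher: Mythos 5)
Your argument is correct and is the standard proof that the growth degree is a quasi-isometry invariant: the containment $f(B(v,r))\subseteq B(f(v),\lambda r+\epsilon)$, the bounded-multiplicity-of-fibers estimate coming from $d_1(x,x')\le\lambda\epsilon$ together with the global degree bound, the symmetric estimate via the quasi-inverse $g$, and the observation that multiplicative constants and affine reparametrizations of $r$ disappear after taking $\log(\cdot)/\log r$ and letting $r\to\infty$. Note that the paper itself does not prove this theorem in the text or appendix; it defers entirely to theorem 2.22 of the cited reference, so there is no in-paper proof to compare against, but your sketch is precisely the expected argument (one cosmetic slip: the constants are absorbed by dividing a single logarithm by $\log r$, not by a ``double logarithm'').
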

\begin{conclusion}
This implies that in our RG scenario, the dimension of the various
spaces remain constant under coarse-graining or scaling limit provided
the above assumption is fulfilled. This is then in particular the
case for the resulting continuum limits. 
\end{conclusion}
It follows that if one wants to have a changing dimension which depends
for example on the scale of spatial resolution one has to change these
assumptions.

Our macroscopic space-time is four-dimensional. It is a surprisingly
deep question in our context whether a general infinite graph/network
has an integer dimension. The whole section 3 of \citep{Requ2} was
devoted to this problem. This investigation culminated in the following
theorem: 
\begin{theo}
Let $G$ be a graph with locally finite vertex degree, being connected,
and \textbf{vertex transitive}, then its growth degree, and hence
its dimension, is an integer. The same holds then for graphs being
quasi-isometric to $G$. 
\end{theo}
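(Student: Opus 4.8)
The plan is to reduce the statement to a pair of classical results from geometric group theory and then to invoke the quasi-isometry invariance of the dimension established in the theorem immediately above. First I would record two elementary reductions. Vertex transitivity forces every vertex to have the same degree $k$, so $G$ automatically has \emph{globally bounded} vertex degree; and the very assumption that $D(G)$ is finite means that $G$ has polynomial growth, since exponential or intermediate growth would make the ratio in (\ref{eq:dim-def}) diverge. These two facts put us exactly in the regime in which the dimension is a well-defined quasi-isometry invariant.

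The core of the argument is to pass from the graph to an associated group. If $G$ is itself the Cayley graph of a finitely generated group $\Gamma$ (with respect to a finite symmetric generating set), then the graph metric is the word metric and $\beta(G,v_i,r)$ is exactly the word-growth of $\Gamma$, whence $D(G)=d(\Gamma)$; Gromov's polynomial-growth theorem then shows that $\Gamma$ is virtually nilpotent. In the general vertex-transitive case $G$ need not be a Cayley graph, and here I would invoke Trofimov's structure theorem: a connected, locally finite, vertex-transitive graph of polynomial growth is quasi-isometric to the Cayley graph of a finitely generated virtually nilpotent group $\Gamma$. This reduction is the step I expect to require the most care. By the quasi-isometry invariance of the dimension (the theorem just stated, valid because both spaces have globally bounded vertex degree), it follows that $D(G)=d(\Gamma)$.

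It then remains to show that the growth degree of a finitely generated virtually nilpotent group is an integer. Here I would apply the Bass--Guivarc'h formula, which evaluates it explicitly in terms of the lower central series $\{\gamma_i(\Gamma)\}$ as
\begin{equation}
d(\Gamma)=\sum_{i\geq 1} i\cdot\operatorname{rank}\big(\gamma_i(\Gamma)/\gamma_{i+1}(\Gamma)\big).
\end{equation}
Each torsion-free rank is a non-negative integer, so the weighted sum $d(\Gamma)$ is a non-negative integer; these facts are standard and may be found in \citep{Harpe}, and the derivation tailored to the present setting is in \citep{Requ2}. Chaining the equalities yields $D(G)=d(\Gamma)\in\mathbb{Z}$.

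Finally, the closing assertion is immediate. If $G'$ is quasi-isometric to $G$ and has globally bounded vertex degree, then the quasi-isometry invariance of the dimension gives $D(G')=D(G)$, which we have just shown is an integer. The main obstacle is therefore not integrality itself---once a virtually nilpotent group has been identified, Bass--Guivarc'h delivers it at once---but the two deep structural inputs: Trofimov's reduction of a vertex-transitive graph to a Cayley graph, and Gromov's identification of polynomial growth with virtual nilpotency.
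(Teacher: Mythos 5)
Your proposal is correct and follows exactly the route the paper relies on: the paper itself gives no proof of this theorem, deferring entirely to Section 3 of \citep{Requ2}, whose argument is the standard geometric-group-theory chain you describe (vertex transitivity gives constant, hence globally bounded, degree; Trofimov's structure theorem reduces a connected, locally finite, vertex-transitive graph of polynomial growth to a Cayley graph of a virtually nilpotent group up to quasi-isometry; the Bass--Guivarc'h formula makes the growth degree an integer; and quasi-isometry invariance of the dimension transports the conclusion back to $G$ and to any bounded-degree graph quasi-isometric to it). Your explicit remark that the statement tacitly assumes polynomial growth (i.e.\ finite $D(G)$), since a vertex-transitive graph of exponential or intermediate growth has $D(G)=\infty$, is a point the paper leaves implicit and is worth keeping.
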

That is, if we want to have an integer dimension, one way is to start
from such relatively homogeneous graphs (note that vertex transitivity
implies a constant vertex degree).

We want to briefly come back to the possibility of changing the dimension
under the RG, i.e. the picture that the individual scales of resolution
of our S-T may have their own dimensions (possibly non-integer ones
on more primordial scales). We started to discuss this possibility
in section 8 of \citep{RG} where we introduced \textbf{critical network
states}. These are states with both a local and a \textbf{translocal}
wiring structure. If we have to delete these translocal edges in the
course of reconstructing a smooth macroscopic space-time, we may leave
the class of quasi-isometries and hence our network dimension may
change. 

This idea was further explored in \citep{wormhole} where we developed
the concept of \textbf{wormhole spaces}. We argued that for example
the \textbf{BH-area law}, the \textbf{holographic principle} and \textbf{quantum
entanglement} can find a natural explanation in such a framework. 

\section{Conclusion\label{sec:Conclusion}}

In this work, we have laid out the basis of a novel approach to the
emergence of smooth spacetime from a discrete substratum based on
purely geometric notions. The approach resembles the Wilsonian renormalization
procedure. The starting point or the initial condition is a metric
space $\left(G_{0},d_{0}\right)$, where $G_{0}$ is an uncolored
graph\footnote{We postpone the treatment of colored graphs to a future work.}
and $d_{0}$ a graph metric defined on it. The renormalization procedure
contains two operations: a coarse graining operation
\begin{equation}
\mathcal{K}:G_{i}\to G_{i+1},
\end{equation}
that is a quasi-isometry map between discrete spaces, and a rescaling
map
\begin{equation}
\phi_{\lambda}:\left(G_{i},d_{i}\right)\to\left(G_{i},\lambda d_{i}\right)
\end{equation}
whose limit $\lambda\to0$ yields the continuum limit $\left(G_{i,\infty},d_{i,\infty}\right)$
of the discrete space $\left(G_{i},d_{i}\right)$,
\begin{equation}
\lim_{\lambda\to0}\phi_{\lambda}\left((G_{i},d_{i})\right)=\lim_{\lambda\to0}\left(G_{i},\lambda d_{i}\right)=\left(G_{i,\infty},d_{i,\infty}\right).
\end{equation}
Here the parameter $\lambda$, parametrizes the distance between the
points on the different length scales. Combining these two operations,
we represent again our findings in the following graphic.
\[
\xymatrix{[G_{0}]_{\infty} & [\mathcal{K}(G_{0})]_{\infty} & [\mathcal{K}^{2}(G_{0})]_{\infty} & [\ldots]_{\infty} & [\mathcal{K}^{m}(G_{0})]_{\infty} & [\cdots]_{\infty} & [\mathcal{K}^{n}(G_{0})]_{\infty}\\
G_{0}\ar[r]^{\mathcal{K}}\ar[u]|-{\phi_{\lambda\rightarrow0}} & \mathcal{K}(G_{0})\ar[r]^{\mathcal{K}}\ar[u]|-{\phi_{\lambda\rightarrow0}} & \mathcal{K}^{2}(G_{0})\ar[r]^{\mathcal{K}}\ar[u]|-{\phi_{\lambda\rightarrow0}} & \cdots\ar[r]^{\mathcal{K}}\ar[u]|-{\phi_{\lambda\rightarrow0}} & \mathcal{K}^{m}(G_{0})\ar[r]^{\mathcal{K}}\ar[u]|-{\phi_{\lambda\rightarrow0}} & \cdots\ar[r]^{\mathcal{K}}\ar[u]|-{\phi_{\lambda\rightarrow0}} & \mathcal{K}^{n}(G_{0})\ar[u]|-{\phi_{\lambda\rightarrow0}}
}
\]
where $[\mathcal{K}^{i}(G_{0})]_{\infty}=G_{i,\infty}$. The lower
chain consisting of discrete coarse grained spaces is called the coarse
graining chain while the upper one is the continuum limit chain.

Our coarse graining operations, $\mathcal{K}$, lie in the class of
quasi-isometries. We show that in the case, where two consecutive
members of the coarse chain $G_{i}$ and $G_{i+1}$ are related by
a true quasi-isometry, their continuum limits $G_{i,\infty},G_{i+1,\infty}$
carry different metrics, so their Gromov-Hausdorff (GH) distance,
$d_{GH}$, is infinite
\begin{equation}
d_{GH}\left(G_{i,\infty},G_{i+1,\infty}\right)=\infty.
\end{equation}
This distance is a measure of how much two spaces are (non)isometric,
and an infinite distance tells us that they are structurally distinct.
However, we show that, although they carry different metrics, they
are topologically homeomorphic, and can even be chosen to be the same
topological space. This implies that different levels of spacetime
will have different metric even if they are the same set. As to the
physical implications see the remarks at the end of section \ref{sec:Geo-Renorm}.

The coarse graining operation goes on until one of the two cases (or
both) happen: either $\mathcal{K}$ turns into a rough isometry, or
the sequence of graphs in the coarse graining chain reaches a stable
fixed point or a set of accumulation points already on the discrete
scale. 

In the former case, we show that the continuum limits of these roughly
isometric spaces are the same, in the sense that their GH-distance
is zero, $d_{GH}(G_{i,\infty},G_{i+1,\infty})=0$. This means that
they are isometric. We identify these roughly isometric spaces with
the phase cell of our state space which is the basin of attraction
for the corresponding continuum limit and which we associate (in the
optimal case) with our classical space-time.

The latter case happens if the sequence of coarse grained spaces are
uniformly compact, in which case, we can use the Gromov's compactness
theorem to show their convergence with respect to $d_{GH}$. We think
however that this possibility is not the generic one.

In any case, we show that the continuum limits are scale invariant,
in the sense that under the rescaling map
\begin{equation}
\phi_{\lambda}:\left(G_{i,\infty},d_{i,\infty}\right)\to\left(G_{i,\infty},\lambda d_{i,\infty}\right)
\end{equation}
their GH-distance is zero,
\begin{equation}
d_{GH}(X_{\infty},lX_{\infty})=0.
\end{equation}
Finally we present a relevant notion of dimension for these discrete
spaces and very briefly discuss its properties. It turns out that
under certain conditions, such as graphs of locally finite vertex
degree, being connected and vertex transitive, not only this dimension
is an integer, but is also stable (i.e. invariant) under both $\mathcal{K}$
and $\phi_{\lambda}$. However, if we deal with graphs/networks having
both a local and translocal wiring structure it may happen that the
coarse graining procedure is no longer local in the sense defined
above. In that case the dimension may become dependent on the coarse
graining scale so that each scale may have its own dimension (cf.
the remarks at the end of section \ref{Dim}).
\begin{acknowledgments}
The authors would like to thank the anonymous referee whose remarks and suggestions helped to improve the typescript.
S. R. would like to acknowledge the partial support of CONACyT Grant
No. 237351: Implicaciones F\'{i}sicas de la Estructura del Espaciotiempo.
He also would like to acknowledge the support of the PROMEP postdoctoral
fellowship (through UAM-I) and the grant from Sistema Nacional de
Investigadores of CONACyT.
\end{acknowledgments}

\appendix

\section{Some relevant definitions}
\begin{defi}
A \textbf{pseudometric} on a set $S$ is a map $d_{S}:S\rightarrow\mathbb{R}$
which has all the properties of the metric except the property that
$d_{S}(x_{1},x_{2})=0\Leftrightarrow x_{1}=x_{2}$ for all $x_{1},x_{2}\in S$.
\end{defi}
\begin{defi}
A metric space in which every sequence has a subsequence that converges
to a point in $M$ is called \textbf{sequentially compact}. For metric
spaces this is equivalent to the compactness defined via open covers.
\end{defi}
\begin{defi}
A metric space is called \textbf{proper} if all its closed balls,
$B(x,r),\,\,\,\forall x\in M$, are compact.
\end{defi}
\begin{defi}
A proper metric space is \textbf{locally compact} if every point has
a compact neighborhood.
\end{defi}
\begin{koro}
Proper spaces are locally compact, but the converse is not true in
general.
\end{koro}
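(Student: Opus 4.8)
The plan is to prove the forward implication directly from the two definitions and then to settle the failure of the converse by exhibiting a single explicit counterexample. Neither half is deep: the whole content lies in correctly unwinding the definitions of \textbf{proper} (every closed ball compact) and \textbf{locally compact} (every point has a compact neighborhood).

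First I would show that every proper metric space $(M,d)$ is locally compact. Fix an arbitrary point $x\in M$ and consider the closed ball $B(x,1)$. By the definition of properness this ball is compact. It is moreover a neighborhood of $x$, since it contains the open ball $\{y\in M : d(x,y)<1\}$, an open set containing $x$. Hence $x$ possesses a compact neighborhood, and as $x$ was arbitrary, $M$ is locally compact. Any fixed radius $r>0$ serves equally well; the choice $r=1$ is only for concreteness.

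Second, to see that the converse fails, I would take $M=(0,1)$ equipped with the metric inherited from $\mathbb{R}$. This space is locally compact: any $x\in(0,1)$ admits a compact neighborhood, for instance a closed subinterval $[x-\delta,x+\delta]\subset(0,1)$ with $\delta>0$ small, which is compact by Heine-Borel and is a neighborhood of $x$. However $(0,1)$ is not proper: the closed ball $B(\frac{1}{2},\frac{1}{2})=\{y\in(0,1):|y-\frac{1}{2}|\le\frac{1}{2}\}$ is the whole space $(0,1)$, and $(0,1)$ is not compact, since the sequence $1/n$ has no subsequence converging to a point of $(0,1)$, so it fails to be sequentially compact. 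Because properness demands that \emph{every} closed ball be compact, a single non-compact closed ball already defeats it. Thus $(0,1)$ is locally compact but not proper, as required.

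The only step demanding any care, and hence the closest thing to an obstacle, is the verification of the counterexample: one must confirm both that the space genuinely is locally compact (choosing the neighborhoods small enough to remain inside the space) and that properness truly fails (it suffices to produce one large enough ball that fills out the entire incomplete space, whence its non-compactness). Everything else reduces to reading off the definitions, so no additional machinery beyond the appendix definitions and the Heine-Borel theorem is needed.
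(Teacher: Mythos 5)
Your proof is correct: the forward direction is the standard one-line unwinding of the definitions (a closed unit ball is a compact neighborhood of its center), and $(0,1)$ with the inherited metric is a valid counterexample to the converse, since $B(\tfrac{1}{2},\tfrac{1}{2})$ is the whole non-compact space. The paper states this corollary in the appendix without any proof, so there is no argument to compare against; your write-up simply supplies the standard justification the authors left implicit.
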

\begin{defi}
A metric space $M$ is \textbf{complete} iff every Cauchy sequence
has a limit in $M$.
\end{defi}
\begin{defi}
A metric space $M$ is \textbf{bounded} if there exists some number
$r$, such that $d(x,y)\leq r,\,\forall x,y\in M$. The smallest possible
such $r$ is called the \textbf{diameter} of $M$.
\end{defi}
\begin{defi}
A metric space $M$ is \textbf{precompact} or \textbf{totally bounded}
if for every $r>0$ there exist finitely many open balls of radius
$r$ whose union covers $M$.
\end{defi}
\section{Proof of some of the theorems}\label{AppSec:Proofs}
\subsection*{Theorem \ref{Theo:Spencer}}
In a slightly different context a related result was already proved
in section VII of \citep{RG}. It is also contained in \citep{Lochmann}.
The order of a clique is bounded by $deg(G)+1$. The cliques, containing
a fixed vertex, $v_{0}$, are lying in its 1-neighborhood $B(v_{0},1)$,
they hence consist of subsets of $B(v_{0},1)$. Furthermore, no clique
is contained in another clique (due to maximality). They hence represent
a \textbf{Sperner system} (see e.g. \citep{Bollo2}) and it follows\footnote{In \citep{RG} we provided a simpler but cruder bound.}:
\begin{equation}
\#\,(cliques|v_{0}\in\;clique)\leq\begin{pmatrix}deg(G)+1\\
\lfloor(deg(G)+1)/2\rfloor
\end{pmatrix}=:Sper_{G}.
\end{equation}
We hence get that the vertex degree in $\mathcal{C}(G)$ is bounded
by
\begin{equation}
(deg(G)+1)\cdot(Sper_{G}-1).
\end{equation}
We now discuss the \textbf{rough isometry} between $G$ and $\mathcal{C}(G)$.
In a first step we define the map 
\begin{equation}
f:\,G\longrightarrow\mathcal{C}(G).
\end{equation}
For each vertex $v$ there exist cliques which contain $v$. We choose
one of them as $f(v)$. Let $C$ be an arbitrary clique in $\mathcal{C}(G)$.
It contains a vertex $v$ and we hence have
\begin{equation}
v\in f(v)\cap C\neq\emptyset.
\end{equation}
It follows that $\mathcal{C}(G)$ is contained in the one-neighborhood
of $f(G)$:
\begin{equation}
\mathcal{C}(G)\subset U_{1}(f(G)).
\end{equation}
Now let $v,v'$ be two arbitrary vertices in $G$. It exists a (\textbf{geodesic})
path, $\gamma$, from $v$ to $v'$ having a length $l(\gamma)=d(v,v')$.
Each pair, $(v_{j-1},v_{j})$, $j=1,\ldots,d(v,v')$, in the path
lies in a clique $C_{j}$. The consecutive cliques $C_{j},C_{j+1}$
have non-void overlap, i.e., they are connected by an edge in $\mathcal{C}(G)$.
We hence get a path in $\mathcal{C}(G)$ of length $d(v,v')-1$.

It follows that $f(v),f(v')$, which each can have at most distance
$1$ from the initial vertex $v$ or end vertex $v'$, have distance
at most $d(v,v')+1$. On the other hand, by the same token we can
conclude:
\begin{equation}
d(f(v),f(v')\geq d(v,v')-1,
\end{equation}
as a path, $\gamma'$ in $\mathcal{C}(G)$ between $f(v),f(v')$ of
length $l(\gamma')\leq d(v,v')-2$ yields a path of length $d(v,v')-1$
in $G$, which is a contradiction. We finally have:
\begin{equation}
d(v,v')-1\leq d(f(v),f(v'))\leq d(v,v')+1
\end{equation}
which proves our theorem.
\subsection*{Theorem \ref{Theo:DiffMetr-SameTop}}
The proof is given in several steps which may be useful for their
own sake. So let $G_{i},G_{j}$ be $(\lambda,\epsilon)$-quasi-isometric
with scaling limits $G_{i,\infty},G_{j,\infty}$. This implies that
there exists a map $f:G_{i}\to G_{j}$ with $d_{G_{j}}(y,f(G_{i})\leq\epsilon$
for all $y\in G_{j}$. We then have 
\begin{lemma}
$G_{j}$ and $f(G_{i})$ are roughly isometric. 
\end{lemma}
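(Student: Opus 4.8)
The plan is to exhibit the natural inclusion as the desired rough isometry. First I would equip the image $f(G_i)$ with the metric obtained by \emph{restricting} $d_{G_j}$, so that $f(G_i)$ becomes a genuine metric subspace of $G_j$ (one must be careful here: the relevant metric is the restriction of $d_{G_j}$, not some metric transported from $G_i$ through $f$, since the latter need not turn $f$ into an isometry). With this choice the inclusion map $\iota\colon f(G_i)\hookrightarrow G_j$ is, by construction, distance preserving, i.e. $d_{G_j}(\iota(a),\iota(b))=d_{G_j}(a,b)$ for all $a,b\in f(G_i)$. In particular $\iota$ satisfies the quasi-isometric embedding inequalities (\ref{eq:def-quas-iso-embd}) with the optimal constants $\lambda=1$ and $\epsilon=0$; it is an isometric embedding and, a fortiori, a $\lambda=1$ map.

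Next I would verify the coarse-surjectivity condition (\ref{eq:def-quas-iso}), which is the only place the hypothesis enters. We are given that $d_{G_j}(y,f(G_i))\le\epsilon$ for every $y\in G_j$, which says precisely that every point of $G_j$ lies within distance $\epsilon$ of the image $\iota(f(G_i))=f(G_i)$. Hence $\iota$ is a quasi-isometry with density constant $C=\epsilon$. Since we already have $\lambda=1$, the map $\iota$ is by definition a rough isometry, indeed an $\epsilon$-rough isometry in the convention $C=\epsilon$ fixed earlier. Because the rough-isometry relation is symmetric (it is the $\lambda=1$ specialization of the symmetric quasi-isometry relation discussed above, for which symmetry and transitivity were already noted), the existence of the rough isometry $\iota\colon f(G_i)\to G_j$ shows that $G_j$ and $f(G_i)$ are roughly isometric, as claimed.

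I expect no genuine obstacle here: the statement is conceptual rather than computational. The entire content is the observation that the correct metric on $f(G_i)$ is the restriction of $d_{G_j}$, which makes $\iota$ distance preserving for free, thereby reducing the lemma to simply reading off the $\epsilon$-density of $f(G_i)$ in $G_j$ directly from the quasi-isometry hypothesis. The one point requiring attention is to keep the subspace metric on $f(G_i)$ distinct from any intrinsic graph metric one might otherwise place on the induced subgraph, since only the former guarantees that $\iota$ is an isometric embedding.
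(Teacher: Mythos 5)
Your proof is correct and is essentially the paper's argument viewed from the opposite direction: the paper constructs the map $G_j\to f(G_i)$ sending each $y$ to a nearby image point $f(x)$ and bounds the distance distortion by $2\epsilon$ via the triangle inequality, whereas you use the inclusion $f(G_i)\hookrightarrow G_j$ (with the subspace metric, as you correctly insist) as an isometric embedding with $\epsilon$-dense image. Since rough isometry is symmetric and these two maps are quasi-inverses of one another, the content is the same; your route is marginally cleaner because the embedding inequalities hold with zero additive error.
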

\begin{proof}
It holds 
\begin{equation}
d_{G_{j}}(y_{1},y_{2})\leq d_{G_{j}}(y_{1},f(x_{1})+d_{G_{j}}(f(x_{1}),f(x_{2}))+d_{G_{j}}(f(x_{2}),y_{2})
\end{equation}
for certain elements $x_{1},x_{2}$. The RHS is $\leq2\epsilon+d_{G_{j}}(f(x_{1}),f(x_{2}))$.
We now choose a map from $y\in G_{j}$ to a corresponding $f(x)$
for a suitable $x$ which defines an $\epsilon$-rough isometry. 
\end{proof}
\begin{koro}
It follows that $G_{j},f(G_{i})$ have the same continuum limit. 
\end{koro}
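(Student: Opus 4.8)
The plan is to obtain the corollary as a short consequence of the rough isometry just established, using only the scaling behaviour of the Gromov--Hausdorff distance together with the triangle inequality; the substantive work has already been done in the preceding lemma. First I would record that, by that lemma, $G_j$ and $f(G_i)$ are $\epsilon$-roughly isometric, so Theorem \ref{rough} gives that their GH-distance is finite. Write $c := d_{GH}(G_j, f(G_i)) < \infty$.

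Next I would feed this into the scaling lemma for the GH-distance, namely $d_{GH}(\lambda X, \lambda X') = \lambda\cdot d_{GH}(X, X')$, applied along the rescaling $\phi_\lambda$ with $\lambda = 2^{-l} \to 0$. This yields $d_{GH}(\lambda G_j, \lambda f(G_i)) = \lambda c \to 0$: the two discrete spaces, although sitting at the fixed finite distance $c$ at unit scale, become arbitrarily GH-close once the metric is shrunk. Here it is decisive that rough (as opposed to merely quasi-) isometry keeps the additive constant finite, since that constant is precisely what gets scaled to zero.

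Then I would close with one application of the triangle inequality. Since $G_j$ is assumed to have continuum limit $G_{j,\infty}$, i.e. $d_{GH}(\lambda G_j, G_{j,\infty}) \to 0$, I estimate
\begin{equation}
d_{GH}\bigl(\lambda f(G_i),\, G_{j,\infty}\bigr) \;\leq\; d_{GH}\bigl(\lambda f(G_i),\, \lambda G_j\bigr) + d_{GH}\bigl(\lambda G_j,\, G_{j,\infty}\bigr) \;=\; \lambda c + d_{GH}\bigl(\lambda G_j,\, G_{j,\infty}\bigr),
\end{equation}
and both terms on the right vanish as $\lambda \to 0$. Hence $\lambda f(G_i) \to G_{j,\infty}$ as well, so $f(G_i)$ shares the continuum limit of $G_j$. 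This is exactly the basin-of-attraction observation of Section \ref{sec:Converg-1} specialized to the pair $\{G_j, f(G_i)\}$.

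The one point requiring genuine care---and the main obstacle---is that all spaces in sight are non-compact, so both ``finite GH-distance'' and the scaling lemma must be read in the pointed sense. I would therefore run the estimate above at the level of the radius-$r$ balls $B(x_0, r)$ with base points matched to within $\epsilon$ (which the $\epsilon$-density of $f(G_i)$ in $G_j$ guarantees), obtain vanishing distance of the scaled balls for each fixed $r$, and only then let $r \to \infty$. Once the base-point bookkeeping is arranged, the additive constants are uniformly controlled and scale away, so no compactness of the ambient spaces is needed beyond the local compactness already assumed for $\mathcal{S}$.
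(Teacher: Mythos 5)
Your argument is correct and follows essentially the same route the paper intends: rough isometry gives finite GH-distance via Theorem \ref{rough}, the scaling lemma $d_{GH}(\lambda X,\lambda X')=\lambda\,d_{GH}(X,X')$ sends that distance to zero as $\lambda\to 0$, and hence the scaled spaces share the same limit (the paper leaves this as an immediate consequence of its Observation in Section \ref{sec:Converg-1}). Your extra care about reading everything in the pointed GH-sense for non-compact spaces is a welcome refinement the paper glosses over, but it does not change the approach.
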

For the $x,x'$ which are mapped on the same $y$ under the quasi-isometry
$f$ we have 
\begin{equation}
0\geq\lambda^{-1}d_{G_{i}}(x,x')-\epsilon\quad\text{hence}\quad d_{G_{i}}(x,x')\leq\lambda\cdot\epsilon
\end{equation}
We choose a quasi-inverse $g$ to $f$ by selecting one element in
the preimage of $y$. We observe 
\begin{ob}
$G_{i}$ is also roughly isometric to $g(f(G_{i}))$. 
\end{ob}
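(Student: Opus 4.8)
The plan is to recognize the statement as an instance of the general fact that a metric space is roughly isometric to any subspace that is coarsely dense in it, and then to supply the required density from the fiber estimate established immediately above. Concretely, I would exhibit $g(f(G_i))$ as a $\lambda\epsilon$-dense subset of $G_i$ and verify that the inclusion of this subset is a rough isometry in the sense of (\ref{eq:def-quas-iso-embd})--(\ref{eq:def-quas-iso}).

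First I would record the two structural properties of the selected quasi-inverse $g$. By construction $g$ assigns to each $y\in f(G_i)$ a single point $g(y)$ in the preimage $f^{-1}(y)$, so that $f(g(y))=y$. Hence for every $x\in G_i$ one has $f(g(f(x)))=f(x)$, which means that $x$ and $g(f(x))$ lie in the \emph{same} fiber of $f$. Invoking the estimate from the preceding observation---that any two points of $G_i$ sent to a common image under $f$ are at distance at most $\lambda\epsilon$---I obtain
\begin{equation}
d_{G_i}(x,g(f(x)))\leq\lambda\epsilon\qquad\text{for all }x\in G_i .
\end{equation}
Since $g(f(x))\in g(f(G_i))$, this shows that $g(f(G_i))$ is $\lambda\epsilon$-dense in $G_i$: every point of $G_i$ lies within $\lambda\epsilon$ of a point of the subset.

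Next I would treat the inclusion $\iota:g(f(G_i))\hookrightarrow G_i$, where the subset carries the metric induced from $G_i$. This map is distance preserving, so the quasi-isometric embedding inequality (\ref{eq:def-quas-iso-embd}) holds for $\iota$ with $\lambda=1$ and additive constant $0$. The density established above supplies the coarse-surjectivity condition (\ref{eq:def-quas-iso}) with constant $C=\lambda\epsilon$. Setting the common constant $\epsilon'=C=\lambda\epsilon$, the map $\iota$ is an $\epsilon'$-rough isometry, and therefore $G_i$ and $g(f(G_i))$ are roughly isometric, as claimed.

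I do not anticipate a genuine obstacle: the argument is short, and its only nontrivial ingredient---the uniform bound $\lambda\epsilon$ on the diameters of the fibers of $f$---has already been derived just above. The single point that merits care is the verification that $g(f(x))$ actually lands in the fiber over $f(x)$, so that the fiber bound applies; this follows from the defining relation $f\circ g=\mathrm{id}$ on $f(G_i)$ for the chosen quasi-inverse. With that in place, $g\circ f$ acts as a near-retraction of $G_i$ onto $g(f(G_i))$, displacing each point by at most $\lambda\epsilon$, and the rough isometry follows at once.
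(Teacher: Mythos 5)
Your proof is correct and takes essentially the same route as the paper: the paper derives exactly your key ingredient---the fiber bound $d_{G_i}(x,x')\leq\lambda\epsilon$ for points with $f(x)=f(x')$---and defines $g$ by selecting one element of each preimage, whereupon the observation follows just as you argue, with $g\circ f$ displacing each point by at most $\lambda\epsilon$ so that the isometric inclusion of the $\lambda\epsilon$-dense subset $g(f(G_i))$ is a rough isometry. Your write-up merely makes explicit the density-plus-inclusion step that the paper leaves implicit.
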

\begin{conclusion}
We can restrict ourselves to the spaces $f(G_{i}),g(f(G_{i}))$. The
quasi-isometry defines a (bilipschitzian) equivalence 
\begin{equation}
\lambda^{-1}d_{G_{i}}(x,x')\leq d_{G_{j}}(f(x),f(x'))\leq\lambda d_{G_{i}}(x,x')
\end{equation}
 between the two spaces.
\end{conclusion}
Now we take the scaling limit on both sides and get 
\begin{equation}
2^{-l}\cdot\lambda^{-1}d_{G_{i}}(x,x')\leq2^{-l}\cdot d_{G_{j}}(f(x),f(x'))\leq2^{-l}\cdot\lambda d_{G_{i}}(x,x')\label{*}
\end{equation}
with corresponding bijective maps $f_{l}:g(f(G_{i}))\to f(G_{i})$.
We know that $2^{-l}G_{i},2^{-l}G_{j}$ converge in GH-sense to $G_{i,\infty},G_{j,\infty}$.
By theorem \ref{rough} and the following remarks we conclude that
there exist rough isometies between $2^{-l}G_{i},2^{-l}G_{j}$ and
$G_{i,\infty},G_{j,\infty}$. Equation (\ref{*}) shows that, in the
scaling limit, we get a continuous bijective map $f_{\infty}$ between
$G_{i,\infty},G_{j,\infty}$ with metrics $d_{i,\infty},d_{j,\infty}$
according to the bilipschitzian equivalence, described above. This
shows that the two limit spaces are homeomorphic. The following observation
concludes the proof. 
\begin{ob}
With the help of the map $f_{\infty}$ we can transfer the metric
structure from $G_{i,\infty}$ to $G_{j,\infty}$ and get two metrics
on the same space which are related to each other by the above bilipschitzian
equivalence which ultimately is a consequence of the quasi-isometry
we started from.
\end{ob}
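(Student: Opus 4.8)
The plan is to exploit the observation that a quasi-isometry, once its additive defect is discarded, is a bilipschitz map, and that this additive defect is washed out by the rescaling factor $2^{-l}$ in the continuum limit, while the multiplicative constant survives. First I would fix a $(\lambda,\epsilon)$-quasi-isometry $f:G_i\to G_j$ together with a quasi-inverse $g:G_j\to G_i$, both of which exist by the symmetric characterization of quasi-isometry recalled in Section \ref{sec:coarse}. Since $f$ is in general neither injective nor surjective, the first task is to replace $G_i,G_j$ by spaces on which $f$ behaves like a genuine bijection. The defining condition $d_{G_j}(y,f(G_i))\le\epsilon$ says that $f(G_i)$ is $\epsilon$-dense in $G_j$, so I would prove a short lemma, via the triangle inequality, that $G_j$ and $f(G_i)$ are $\epsilon$-roughly isometric; by Theorem \ref{rough} they then have finite GH-distance and hence the same continuum limit. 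The symmetric argument applied to $g$ shows that $G_i$ is roughly isometric to $g(f(G_i))$. This reduces the statement to comparing the two ``thinned'' spaces $f(G_i)$ and $g(f(G_i))$, on which $g\circ f$ is the identity up to a bounded defect.

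On these reduced spaces the quasi-isometry inequality becomes the bilipschitz sandwich
\begin{equation}
\lambda^{-1}d_{G_i}(x,x')\le d_{G_j}(f(x),f(x'))\le\lambda\,d_{G_i}(x,x').
\end{equation}
The heart of the proof is to pass this sandwich through the scaling limit. Multiplying by $2^{-l}$ and invoking the elementary lemma $d_{GH}(2^{-l}X,2^{-l}X')=2^{-l}d_{GH}(X,X')$, I would note that $2^{-l}G_i$ and $2^{-l}G_j$ converge in pointed GH-sense to $G_{i,\infty}$ and $G_{j,\infty}$; these limits exist by Conclusion \ref{thm:upg-converge}, since quasi-isometry preserves uniform polynomial growth. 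Theorem \ref{rough} furnishes rough isometries between each scaled space and its GH-limit whose defects tend to zero, so the bilipschitz maps $f_l$ between the scaled reduced spaces admit a well-defined limit $f_\infty:G_{i,\infty}\to G_{j,\infty}$.

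It then remains to verify that $f_\infty$ is a bilipschitz homeomorphism. The ratio of the outer bounds in the scaled sandwich equals the fixed constant $\lambda^2$, independent of $l$, so in the limit one obtains
\begin{equation}
\lambda^{-1}d_{i,\infty}(x,x')\le d_{j,\infty}(f_\infty(x),f_\infty(x'))\le\lambda\,d_{i,\infty}(x,x'),
\end{equation}
which forces $f_\infty$ to be injective, Lipschitz with Lipschitz inverse, hence a homeomorphism onto its image; density of the image together with completeness of the limit space upgrades this to surjectivity. Finally, transporting $d_{j,\infty}$ back along $f_\infty$ realizes both limit metrics on the single underlying set $G_{i,\infty}$, establishing the ``same topological space, different metrics'' claim, the difference of metrics being exactly the non-rigid bilipschitz equivalence inherited from the original true quasi-isometry.

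I expect the main obstacle to be the careful bookkeeping of the additive defects through the limit: one must confirm that the constants $\epsilon$ and the quasi-inverse defect $\rho$, which genuinely matter on the discrete level, vanish after multiplication by $2^{-l}$ while the multiplicative constant $\lambda$ persists, and that the several rough isometries — between $G_j$ and $f(G_i)$, between $G_i$ and $g(f(G_i))$, and between each scaled space and its GH-limit — compose coherently so that $f_\infty$ is well defined and independent of the choices of quasi-inverse and of preimage representatives. Securing surjectivity of $f_\infty$ onto all of $G_{j,\infty}$, rather than merely a dense subset, by appealing to completeness of the limit space, is the most delicate point.
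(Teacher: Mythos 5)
Your proposal follows essentially the same route as the paper's own argument: reduce to the $\epsilon$-dense images $f(G_i)$ and $g(f(G_i))$ via rough isometries (so they share the continuum limits), extract the bilipschitz sandwich $\lambda^{-1}d_{G_i}\le d_{G_j}\circ(f\times f)\le\lambda d_{G_i}$, multiply by $2^{-l}$ so that the additive defects are washed out while the multiplicative constant $\lambda$ survives, obtain the bilipschitz homeomorphism $f_{\infty}$ between the scaling limits, and finally pull back the metric along $f_{\infty}$ to realize both metrics on one underlying space. Your added care about surjectivity of $f_{\infty}$ via completeness and about its independence from the choice of quasi-inverse and preimage representatives is a refinement of points the paper merely asserts, not a different proof.
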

A related result was proved in \citep{Lochmann} by different methods.

\bibliography{main}

\end{document}